\newcommand{\ket}[1]{\vert#1\rangle}
\newcommand{\dket}[1]{\vert#1\rangle\hspace{-.8mm}\rangle}
\newcommand{\bra}[1]{\langle#1\vert}
\newcommand{\dbra}[1]{\langle\hspace{-.8mm}\langle #1\vert}
\newcommand{\ketbra}[2]{\vert #1 \rangle \hspace{-.4mm} \langle #2 \vert}
\newcommand{\dketbra}[2]{\vert #1 \rangle \hspace{-.8mm} \rangle \hspace{-.4mm} \langle\hspace{-.8mm}\langle #2 \vert}
\newcommand{\id}{\mathds{1}}
\DeclareMathOperator{\tr}{tr}
\DeclareMathOperator{\sgn}{sgn}
\newcommand*\dif{\mathop{}\!\mathrm{d}}
\newcommand{\mean}[1]{\left\langle#1\right\rangle}
\newcommand{\<}{\langle}
\renewcommand{\>}{\rangle}
\newcommand{\ie}{\textit{i.e.}}
\renewcommand{\H}{\mathcal{H}}
\renewcommand{\L}{\mathcal{L}}
\newcommand{\SU}{\mathcal{SU}}
\newcommand{\doublewidetilde}[1]{{%
  \mathpalette\double@widetilde{#1}%
}}
\newcommand{\double@widetilde}[2]{%
  \sbox\z@{$\m@th#1\widetilde{#2}$}%
  \ht\z@=.9\ht\z@
  \widetilde{\box\z@}%
}
\newcommand{\map}[1]{\widetilde{#1}}  
\newtheorem{definition}{Definition}
\newtheorem{theorem}{Theorem}
\newtheorem{lemma}{Lemma}
\newtheorem{notation}{Notation}
\newtheorem{proposition}{Proposition}
\begin{document}
\author{Daniel Ebler, Michał Horodecki, Marcin Marciniak, Tomasz Młynik, Marco Túlio Quintino, Michał Studziński
\thanks{ 
Daniel Ebler is at Theory Lab, Central Research Institute, 2012 Labs, Huawei Technology Co. Ltd, Hong Kong Science Park, Hong Kong SAR. 
Marcin Marciniak, Tomasz Młynik and  Michał Studziński are at Institute of Theoretical Physics and Astrophysics and National Quantum Information Centre in Gdańsk, and at Faculty of Mathematics, Physics and Informatics, University of Gdańsk. 
Michał Horodecki is at International Centre for Theory of Quantum Technologies, University of Gdańsk. 
Marco Túlio Quintino is at Sorbonne Université, CNRS, LIP6, Paris, France, and Institute for Quantum Optics and Quantum Information Vienna (IQOQI), Austrian Academy of Sciences, and at Faculty of Physics, University of Vienna.}}
	\date{\today}
    
\title{Optimal universal quantum circuits for unitary complex conjugation}

\maketitle
\begin{abstract}
Let $U_d$ be a unitary operator representing an arbitrary $d$-dimensional unitary quantum operation. This work presents optimal quantum circuits for transforming a number $k$ of calls of $U_d$ into its complex conjugate $\overline{U_d}$. Our circuits admit a parallel implementation and are proven to be optimal for any $k$ and $d$ with an average fidelity of $\left\langle{F}\right\rangle =\frac{k+1}{d(d-k)}$. Optimality is shown for average fidelity, robustness to noise, and other standard figures of merit. This extends previous works which considered the scenario of a single call ($k=1$) of the operation $U_d$, and the special case of $k=d-1$ calls. We then show that our results encompass optimal transformations from $k$ calls of $U_d$ to $f(U_d)$ for any arbitrary homomorphism $f$ from the group of $d$-dimensional unitary operators to itself, since complex conjugation is the only non-trivial automorphism on the group of unitary operators. Finally, we apply our optimal complex conjugation implementation to design a probabilistic circuit for reversing arbitrary quantum evolutions.
\end{abstract}
\section{Introduction}

The field of quantum information and computation typically describes how physical devices process information encoded into quantum systems. Such devices are often referred to as quantum gates in mathematical frameworks, and can be combined into quantum circuits by defining the order of processing \cite{NilsenChuangBook}. Conventionally, quantum gates were though of as operations which take quantum states as inputs, and, after processing, output a quantum state. However, in a paradigm sometimes referred to as ``higher-order quantum transformations'' quantum gates rather than quantum states may be subject to transformation~\cite{bisio19higher,perinotti16higher}. Such higher-order quantum transformations can be viewed as a ``circuit-board" approach for quantum operations, and has been found to have versatile applications encompassing quantum  circuit design~\cite{chiribella07,chiribella08,chiribella09networks}, control of quantum system dynamics~\cite{navascues17,trillo19,trillo22}, learning/storing of unitary operations~\cite{bisio10,sedlak18,mo2019quantum},  identifying cause-effect relations~\cite{chiribella18cause_effect}, and estimating and discriminating quantum channels~\cite{chiribella08memory_effects,bavaresco21,bavaresco21b}, among others.

When considering transformations between unitary operations, previous works have considered tasks such as unitary cloning~\cite{chiribella08clone}, unitary inversion~\cite{ebler16,quintino19PRL} (see also \cite{yoshida21}), unitary transposition~\cite{quintino19PRL}, unitary conjugation~\cite{miyazaki17}, unitary iteration~\cite{soleimanifar16}, unitary controlisation~\cite{araujo14control,bisio15control,gavorova20control,dong19control}, and general homomorphisms and anti-homomorphisms between unitary operators~\cite{bisio13,quintino21unitary}.
Unitary complex conjugation was first considered in Ref.~\cite{ebler16}, where non-exact optimal transformations between a single call of an arbitrary $d$-dimensional unitary operation into its complex conjugation were obtained. Later, Ref.~\cite{miyazaki17} presented a parallel quantum circuit which attains exact unitary conjugation by making $k=d-1$ calls of a $d$-dimensional unitary. Also, Ref.~\cite{quintino19PRA} considered the problem of exact unitary conjugation with an allowed rate of failure and showed that the probability of success is exactly zero when $k<d-1$. Hence, the probability of transforming $U_d$ into its complex conjugation cannot be enhanced by multi-call protocols such as the success-or-draw method~\cite{dong20}.  

In this work, we present deterministic and universal quantum circuits that transform $k$ calls of an arbitrary $d$-dimensional unitary operator $U_d$ into its complex conjugate $\overline{U_d}$. We present an explicit parallel circuit which makes no use of extra auxiliary space or memory for this task. We then prove that our construction attains the maximal performance allowed by quantum theory for any number of calls $k$ and any dimension $d$. 

\begin{figure}[!ht]
\begin{center}
	\includegraphics[width=0.49\textwidth]{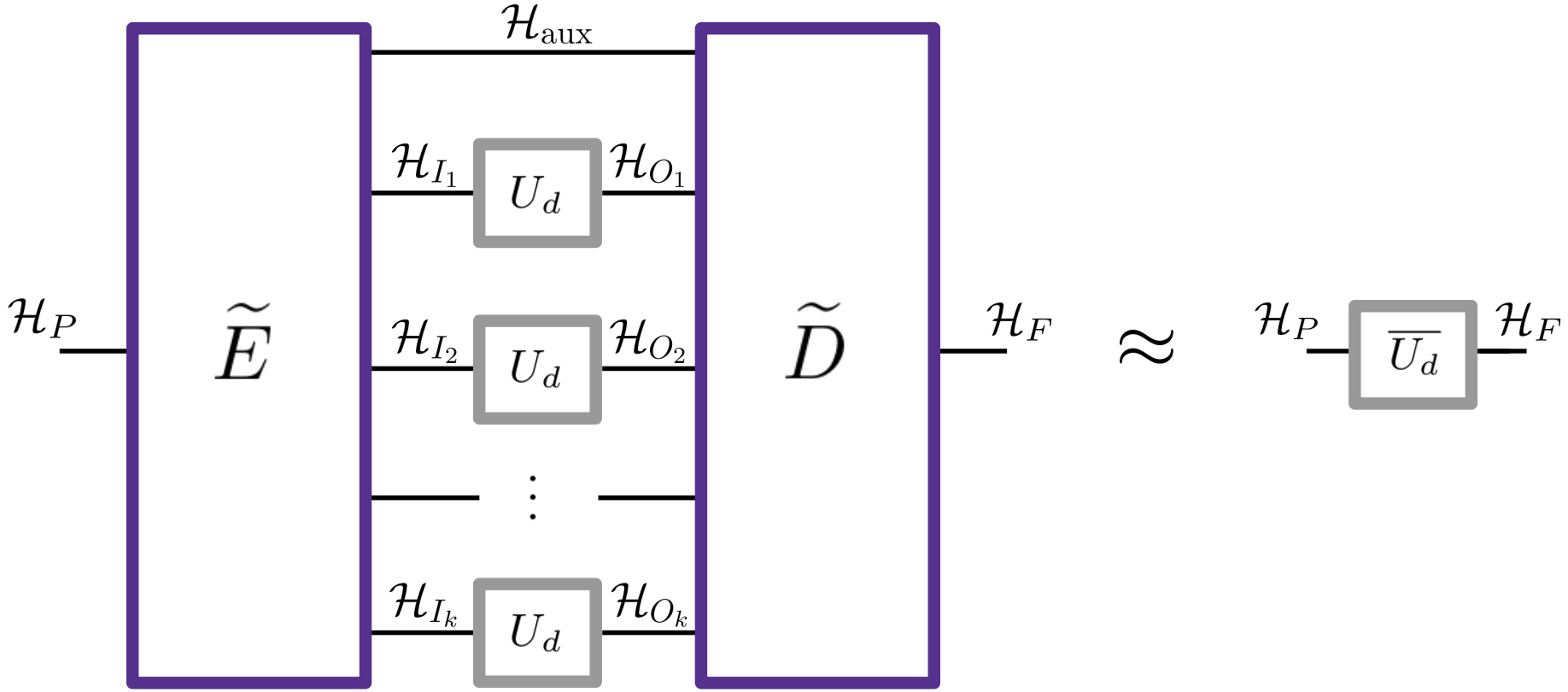}
\end{center}
\caption{Parallel superchannel transforming $k$ calls of a unitary operation $U_d$ into its complex conjugation $\overline{U_d}$. Here $\map{E}$ stands for an ``encoder'' quantum channel which is performed before the input operations $U_d^{\otimes k}$ and $\map{D}$ stands for a ``decoder'' quantum channel acting after the input operations. When $k<d-1$, unitary complex conjugation can only be obtained approximately with an average fidelity of $\mean{F}=\frac{k+1}{d(d-k)}$. Here we prove this number and present optimal quantum circuits for this task. The optimal circuit can operate in parallel on input transformations and does not require additional auxiliary spaces (depicted as the extra line labelled by the space $\mathcal{H}_{\rm aux}$ above the gates $U_d$).} \label{Fig1}
\end{figure}

\section{Transforming unitary operations with quantum circuits} 
In the following, we formally introduce the problem considered in this manuscript.

We start by pointing out that, with no loss in performance, we can restrict circuits designed for unitary complex conjugation to  those that admit a parallel implementation, as depicted in Fig. \ref{Fig1}. Indeed, it was shown in Ref.~\cite{bisio13} that transformations $f$ acting on unitary operators $U$ as $f(U)$ can be implemented optimally by parallel circuits when $f$ is a homomorphism with respect to operator composition, \textit{i.e.,} for $f$ satisfying $f(UV)=f(U)f(V)$ for every unitary $U$ and $V$\footnote{Interestingly, parallel circuits are optimal even when considering transformations which do not respect a definite causal order, such as the quantum switch~\cite{chiribella09_switch} or general process matrices \cite{oreshkov11}.}. The case of unitary complex conjugation is naturally included, since $\overline{UV}=\overline{U}\; \overline{V}$.

In quantum theory, states are described by unit-trace, positive semidefinite linear operators\footnote{The symbol $\H$ denotes a finite-dimensional Hilbert space which is isomorphic to $\mathbb{C}_d$ for some dimension $d\in\mathbb{N}$. The symbol $\L(\H)$ denotes the set of linear operators acting on $\H$ (endomorphisms).} -- \textit{i.e.}, $\rho\in\L(\H)$, $\rho\geq0$, and $\tr(\rho)=1$. Deterministic transformations between input-states and output-states are implemented by quantum channels, which are completely positive\footnote{A linear map $\map{C}:\L(\H_{\bm{I}})\to\L(\H_{\bm{O}})$ is positive when $\map{C}(\rho)\geq0$ for every positive semidefinite linear operator $ \rho\geq0, \rho\in \L(\H_{\bm{I}})$. \\
A map is CP if it is positive for all its trivial extensions, that is, $\map{C}\otimes\map{\id}(\sigma)\geq 0$ for every $\sigma\geq0, \sigma\in\L(\H\otimes\H_\text{aux})$ where $\map{\id}:\L(\H_\text{aux})\to\L(\H_\text{aux})$ is the identity map on an arbitrary finite-dimensional space $\H_\text{aux}$, \textit{i.e.,} $\map{\id}(\rho)=\rho, \forall \rho\in\L(\H_\text{aux})$.}  (CP) and trace-preserving\footnote{A linear map $\map{C}:\L(\H_{\bm{I}})\to\L(\H_{\bm{O}})$ is TP when $\tr(\map{C}(\rho))=\tr(\rho)$ for every linear operator $\rho\in \L(\H_{\bm{I}})$.} (TP) linear maps\footnote{In this paper, we reserve the word map for linear transformations between linear operators. Also, adopt the convention of writing a tilde on the top of maps to make an easy distinction between operators, \textit{e.g,} $C\in\L(\H_{\bm{I}}\otimes \H_{\bm{O}})$ and maps $\map{C}:\L(\H_{\bm{I}})\to\L(\H_{\bm{O}})$. This convention is particularly useful when dealing with Choi representations of linear maps.} $\map{C}:\L(\H_{\bm{I}})\to\L(\H_{\bm{O}})$, where $\H_{\bm{I}}$ and $\H_{\bm{O}}$ are the linear spaces relative to input and output, respectively.  A quantum channel $\map{U}:\L(\H_{\bm{I}})\to\L(\H_{\bm{O}})$ is said to be unitary if it can be written as $\map{U}(\rho)=U_d \rho U_d^\dagger$ for some $d$-dimensional unitary operator $U_d$. Note that unitary operators which only differ by a global phase represent the same unitary channel. Hence, with no loss of generality, we may assume that the unitary operator $U_d$ belongs to the group of special unitary operators $\SU(d)$, that is, unitary operators with determinant one.

Let $\map{C_\text{in}}:\L(\H_{\bm{I}})\to\L(\H_{\bm{O}})$ be an arbitrary input channel to a quantum circuit, which transforms it into an output channel $\map{C_\text{out}}:\L(\H_P)\to\L(\H_F)$. The labels $P$ and $F$ stands for past and future, respectively (see Fig. \ref{Fig1}). Quantum circuits designed to obtain the transformation $\map{C_\text{in}}\mapsto \map{C_\text{out}}$ may be analysed by means of encoder and decoder channels \cite{chiribella07,chiribella08}, a method which we describe in the following.

\begin{enumerate}
	\item Before performing the input operation $\map{C_\text{in}}$, we apply a quantum channel (encoder) 
$\map{E}:\L(\H_P)\to\L(\H_{\bm{I}}\otimes\H_\text{aux})$, where $\H_\text{aux}$ is an arbitrary auxiliary (memory) space which may be used to improve performance.
\item Then, the operation $\map{C_\text{in}}\otimes \map{\id_\text{aux}}$ is applied, where $\map{\id_\text{aux}}$ is the identity map, which acts trivially on the auxiliary system, \textit{i.e.,} $\map{\id_\text{aux}}(\rho)=\rho$, for every operator $\rho$.
\item Finally, we perform a quantum channel (decoder) ${\map{D}:\L(\H_{\bm{O}}\otimes\H_\text{aux})\to\L(\H_F)}$ to obtain the output
\begin{align} \label{eq:superchannel}
\map{C_\text{out}}=
\map{D} \circ \Big(\map{C_\text{in}}\otimes \map{\id_\text{aux}}\Big)\circ \map{E}.
\end{align} 
\end{enumerate}

The use of $k$ independent calls of a unitary channel $\map{U_d}$ may be mathematically represented by a single channel $\map{U_d}^{\otimes k}$. Then, by identifying $\map{U_d}^{\otimes k}$ as $\map{C_\text{in}}$ in the routine described above, a parallel quantum circuit transforms $k$ calls of $\map{U_d}$ as 
\begin{align} \label{eq:parallelED}
	 \map{U_d}^{\otimes k} \mapsto& \map{D} \circ \Big(\map{U_d}^{\otimes k}\otimes \map{\id}\Big)\circ \map{E},
\end{align} 
as illustrated in Fig.~\ref{Fig1}. When the input channel is composed of $k$ operations, it is convenient to write the input space $\H_{\bm{I}}$ explicitly as 
$\H_{\bm{I}}:=\H_{I_1}\otimes \ldots \otimes \H_{I_{k}}$, where the bold letter $\bm{I}$ emphasises that the linear space $\H_{\bm{I}}$ has an  internal tensor product structure (similarly, we write $\H_{\bm{O}}:=\H_{O_1}\otimes \ldots \otimes \H_{O_{k}}$). Further, we have
$\H_P\cong\H_F\cong\H_{I_i}\cong\H_{O_i}\cong \mathbb{C}_d$ for every $i\in\{1,\ldots,k\}$ and
$\H_{\bm{I}}\cong\H_{\mathbf{O}}\cong \mathbb{C}_d^{\otimes k}$ since the quantum circuit transforms $k$ calls of $d$-dimensional unitary operations into a $d$-dimensional unitary operation.

The performance of transformations can be quantified by means of channel fidelity, which is known to respect several  properties that are relevant for information processing \cite{raginsky01}. 
The fidelity between quantum channels is usually defined through their Choi operators. The Choi operator of a quantum channel is obtained via the Choi isomorphism is defined as follows. Let 
	 $T:\H_{\bm{I}}\to\H_{\bm{O}}$ be a linear operator. Its Choi vector $\dket{T}\in\H_{\bm{I}}\otimes\H_{\bm{O}}$ is, 
\begin{equation}
	\dket{T}:= \sum_{i} \ket{i} \otimes \Big(T \ket{i}\Big),
\end{equation}
where $\{\ket{i}\}_{i=1}^d$ is the canonical orthonormal basis for $\mathbb{C}_d$, often referred to as computational basis.
	Further, let
	 $\map{C}:\L(\H_{\bm{I}})\to\L(\H_{\bm{O}})$ be a linear map between linear operators. Its Choi operator $C\in\L(\H_{\bm{I}}\otimes\H_{\bm{O}})$ is
\begin{equation}
	C:= \sum_{ij} \ketbra{i}{j} \otimes \map{C}\Big( \ketbra{i}{j}\Big).
\end{equation}
Direct calculation shows that the Choi operator of a unitary channel $\map{U_d}(\rho)=U_d\rho U_d^\dagger$ may be conveniently expressed as $\dketbra{U}{U}$.
 
The fidelity between an arbitrary quantum channel $\map{C}$ and a unitary channel acting as $\map{U_d}(\rho)=U_d\rho U_d^\dagger$ on an input $\rho$ is given by \cite{raginsky01}
\begin{equation}\label{eq:channel_fidelity}
	F(\map{C},\map{U_d}):=\frac{1}{d^2}\dbra{U_d} C \dket{U_d},
\end{equation}
where $\dket{U_d}$ the Choi vector of $U_d$.

We now turn to the task of transforming $k$ copies of a unitary operator $U_d$ into its complex conjugate $\overline{U_d}$  with respect to the canonical computational basis $\{\ket{i}\}_{i=1}^d$.

    Let us decompose $U_d=\sum_{ij} u_{ij}\ketbra{i}{j}$ in the computational basis. The complex conjugate of $U_d$ in said basis reads $\overline{U_d}:=\sum_{ij} \overline{u_{ij}} \ketbra{i}{j}$. 
Note that the definition of $\overline{U_d}$ depends on the choice of basis. To see this, let us express $U_d$ in a different orthonormal basis $\{\ket{i'}\}_{i=1}^d$ as $U_d=\sum_{ij} u_{ij}'\ketbra{i'}{j'}$. Since $u_{ij}'=\bra{i'}U_d\ket{j'}$ may be different to $u_{ij}=\bra{i}U_d\ket{j}$, we conclude
\begin{equation}
    \overline{U_d}:=\sum_{ij} \overline{u_{ij}} \ketbra{i}{j}\neq  \sum_{ij} \overline{u_{ij}'} \ketbra{i'}{j'}=:{\overline{U_d}}'. 
\end{equation}
However, $\overline{U_d}$ and $\overline{U_d}'$ are unitarily equivalent: let $W\in\L(\mathbb{C}_d)$ be a unitary operator mapping the $\{\ket{i'}\}_{i=1}^d$  basis into the $\{\ket{i}\}_{i=1}^d$ basis. One simple example is given by $W:=\sum_i \ketbra{i}{i'}$. Then, we find $WU_dW^\dagger=\sum_{ij}u_{ij}'\ketbra{i}{j}$, and  
\begin{align}
   \overline{WU_dW^\dagger}=&   \sum_{ij} \overline{u_{ij}'} \ketbra{i'}{j'} \\
   =&{\overline{U_d}}',
\end{align}
leading to the identity $\overline{U_d}'=\overline{W}\;\overline{U_d}\; W^T$, where $\overline{W}^\dagger=W^T$ is the transposition of $W$ on the computational basis\footnote{Let $A=\sum_{ij}a_{ij}\ketbra{i}{j}$ be an arbitrary operator acting on $\mathbb{C}_d$,  its transposition in the computational basis is defined as $A^T=\sum_{ij}a_{ij}\ketbra{j}{i}$.}.


As discussed earlier, since the complex conjugation function is a homomorphism, \ie, $\overline{UV}=\overline{U}\,\overline{V}$, we can restrict our analysis to parallel quantum circuits. That is, we focus on circuits which can be as,
\begin{align}
    \map{U_d}^{\otimes k} \mapsto& \map{D} \circ \Big(\map{U_d}^{\otimes k}\otimes \map{\id}\Big)\circ \map{E}\approx \map{\overline{U_d}},
\end{align}
where $\map{E}$ and $\map{D}$ are quantum channels, as presented in Eq.~\eqref{eq:superchannel}. We now set the average channel fidelity as our figure of merit. More precisely, we quantify the performance of our transformation via
\begin{equation}
	\mean{F}:=\int_{U_d\in \SU(d)}  F\Big(\map{D} \circ \big(\map{U_d}^{\otimes k}\otimes \map{\id}\big)\circ \map{E}	, \, \map{\overline{U_d}}\Big)\,\dif U_d ,
\end{equation}
where the fidelity of quantum channels is defined in Eq.~\eqref{eq:channel_fidelity} and the integral is taken in accordance to the Haar measure.
As proven in Ref.~\cite{quintino21unitary}, for the case of homomorphic transformations (as the complex conjugation), the optimal average fidelity coincides with the optimal worst-case fidelity and there is a one-to-one correspondence between average fidelity and the robustness of the transformation to noise, ensuring that the average fidelity is indeed a relevant figure of merit.

\section{Optimal quantum circuit for unitary complex conjugation} \label{sec:construction}
In this section, we state our main result and provide an explicit construction for the optimal quantum circuit for complex conjugation of an arbitrary unitary operation. The proof of optimality and performance evaluation of the circuit presented here are given in Sec.~\ref{sec:proof}.
\begin{theorem}
\label{thm:main}
	Let $U_d$ be a unitary operator representing an arbitrary $d$-dimensional unitary operation $\map{U_d}(\rho)=U_d\rho U_d^\dagger$. When $k\leq d-1$ uses are available, the optimal quantum circuit which transforms $k$ uses of $U_d$ into its complex conjugation $\overline{U_d}$ attains average fidelity $\mean{F}=\frac{k+1}{d(d-k)}$. \\ Additionally, the optimal circuit admits a parallel implementation which makes no use of additional auxiliary spaces.
\end{theorem}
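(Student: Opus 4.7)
The plan is to split Theorem~\ref{thm:main} into an achievability part---exhibit an explicit parallel circuit matching $\mean{F}=\frac{k+1}{d(d-k)}$ using no auxiliary register---and a converse showing no higher-order quantum transformation can do better. The reduction to parallel circuits is already granted by Ref.~\cite{bisio13} via the homomorphism $\overline{UV}=\overline{U}\,\overline{V}$ recalled earlier in the excerpt, so the real content is the sharp value of the fidelity together with the elimination of $\H_\text{aux}$.

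For the construction I would exploit the cofactor identity valid for $U\in\SU(d)$, $\overline{U}_{ij}=\tfrac{1}{(d-1)!}\,\varepsilon_{i\,i_2\cdots i_d}\,\varepsilon_{j\,j_2\cdots j_d}\,U_{i_2 j_2}\cdots U_{i_d j_d}$, which realises $\overline{U_d}$ as the action of $U_d^{\otimes(d-1)}$ on the top antisymmetric subspace $\Lambda^{d-1}(\mathbb{C}_d)$, identified with $\H_F$ via the Hodge isomorphism $\Lambda^{d-1}(\mathbb{C}_d)\simeq\overline{\mathbb{C}_d}$. For $k=d-1$ this directly gives the exact circuit of Ref.~\cite{miyazaki17}. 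For $k<d-1$ my approach is to antisymmetrise the input against a fixed frame built into the encoder and decoder: $\map{E}$ acts as the isometry $|\psi\rangle\mapsto A_k(|\psi\rangle\otimes|e_2\cdots e_k\rangle)$, where $A_k$ projects onto $\Lambda^k(\mathbb{C}_d)$ and $\{e_j\}$ is a fixed orthonormal frame; after $U_d^{\otimes k}$ one lands on another antisymmetric state of the same weight; then $\map{D}$ re-adjoins the complementary $(d-1-k)$-frame, projects onto $\Lambda^{d-1}(\mathbb{C}_d)$, and maps into $\H_F$ via Hodge duality. Because each frame is a local preparation inside $\map{E}$ or $\map{D}$, no external $\H_\text{aux}$ is required.

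To evaluate this circuit I would plug its Choi operators into Eq.~\eqref{eq:channel_fidelity} and compute the Haar integral, which reduces to overlaps of antisymmetric projectors and by standard dimension counting yields $\mean{F}=\frac{k+1}{d(d-k)}$. For the matching converse I would twirl. Let $S\in\L(\H_P\otimes\H_{\bm{I}}\otimes\H_{\bm{O}}\otimes\H_F)$ be the Choi operator of an arbitrary superchannel; covariance of $\mean{F}$ under the $\SU(d)$-representation $V\otimes V^{\otimes k}\otimes\overline{V}^{\otimes k}\otimes\overline{V}$ lets me replace $S$ by its Haar twirl without decreasing the objective, after which $S$ lies in the commutant of that (mixed) representation. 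A mixed Schur--Weyl decomposition splits the commutant into finitely many isotypic blocks, expressing $\mean{F}$ as a weighted sum of scalar coefficients while the superchannel constraints (trace-preservation of encoder, decoder, and their composition) become linear equations on those coefficients; optimality becomes a finite-dimensional convex program which I would close by exhibiting a dual feasible witness attaining $\frac{k+1}{d(d-k)}$.

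The main obstacle I foresee is precisely this last step. The suggestive rewriting $\frac{k+1}{d(d-k)}=\frac{1}{d}\cdot\binom{d}{k}/\binom{d}{k+1}$ indicates the extremal block should correspond to the Young diagram with one column of height $k+1$, but verifying this identification, writing down an explicit dual witness, and ruling out contributions from competing diagrams (especially as $k$ grows towards $d-1$, when more irreducible components become available) is where I expect most of the technical work to concentrate.
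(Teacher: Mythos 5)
There is a genuine gap in your achievability construction. The encoder you propose, $\ket{\psi}\mapsto A_k\bigl(\ket{\psi}\otimes\ket{e_2\cdots e_k}\bigr)$ with a \emph{fixed} orthonormal frame, is not an isometry and not even injective: any $\ket{\psi}$ lying in the span of $\{\ket{e_j}\}$ is annihilated by the antisymmetrisation, so after normalisation this cannot be completed to a trace-preserving map, and the fixed frame destroys the covariance that makes the fidelity uniform over $U_d$. The paper's optimal encoder is not an isometry at all but a genuinely mixing channel whose Choi operator is the \emph{full} normalised antisymmetric projector, $E=\frac{d}{\binom{d}{k+1}}A(d,k+1)$ on $\H_P\otimes\H_{\bm I}$, i.e.\ a sum of $\binom{d}{k+1}$ Kraus operators ranging over an orthonormal basis of the antisymmetric subspace; this is what restores $\SU(d)$-covariance when $k<d-1$ and the antisymmetric subspace is no longer one-dimensional. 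Similarly, your decoder ``projects onto $\Lambda^{d-1}(\mathbb{C}_d)$'', which is not trace-preserving; the paper must add an explicit garbage branch $\bigl(\id_d^{\otimes k}-A(d,k)\bigr)\otimes\sigma$ to make $\map{D}$ a channel. With the correct $E$ and $D$, the fidelity computation does reduce to overlaps of antisymmetric projectors and gives $\frac{1}{d}\binom{d}{k}/\binom{d}{k+1}=\frac{k+1}{d(d-k)}$, as you anticipated, but the circuit you wrote down would not achieve it. (Your Hodge-duality picture is fine for the exact $k=d-1$ case, where the antisymmetric subspace of $\mathbb{C}_d^{\otimes d}$ is one-dimensional and the encoder genuinely is an isometry.)

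Your converse is the same strategy as the paper's (SDP duality over the commutant of the mixed representation, with the single-column diagram extremal), and your rewriting $\frac{k+1}{d(d-k)}=\frac{1}{d}\binom{d}{k}/\binom{d}{k+1}$ correctly points at the antisymmetric block. But you stop exactly where the paper's work begins: the explicit dual witness is $\hat S=\sum_{\alpha\vdash k}\sum_{k,l}\frac{d}{m_\alpha}E^\alpha_{kl}\otimes\tau_P\otimes E^\alpha_{kl}\otimes\tau_F$ built from irreducible basis operators of $\mathbb{C}[S_k]$, the resulting bound is $c=\frac{1}{d}\max_{\lambda\vdash k+1}\sum_{\beta\in\lambda}m_\beta/m_\lambda$, and ruling out the competing diagrams requires an explicit hook-length evaluation of $c(\lambda)$ via step representations of Young diagrams together with Vershik's identity to show $c(\lambda)\leq c([1^{\times(k+1)}])=\frac{k+1}{d-k}$. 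As written, both halves of your argument are plans rather than proofs, and the constructive half would need to be replaced before the plan could be executed.
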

Note that when $k=d-1$, we obtain $\mean{F}=1$, hence unitary complex conjugation may be done exactly when the number of calls respects $k\geq d-1$. This optimal and exact transformation for the $k=d-1$ case was first presented at Ref.~\cite{miyazaki17} and will be reviewed in this section for convenience.
	
It is useful to state some facts regarding the antisymmetric subspace, which will play a major role in constructing optimal quantum circuits for unitary complex conjugation. Let $S_N$ be the group of permutations of $N$ elements, and $V(\pi)\in \mathbb{C}_d^{\otimes N}$ be a representation for the permutation $\pi\in S_N$ in $\mathbb{C}_d^{\otimes N}$, explicitly defined via
\small
\begin{equation}
	V(\pi)\ket{\psi_1}\otimes\ket{\psi_2}\otimes\ldots\ket{\psi_N}=\ket{\psi_{\pi^{-1}(1)}}\otimes\ket{\psi_{\pi^{-1}(2)}}\otimes\ldots\ket{\psi_{\pi^{-1}(N)}} .
\end{equation}
\normalsize
	The antisymmetric subspace of $N$ $d$-dimensional parties is a subspace of $\mathbb{C}_d^{\otimes N}$ formed by vectors $\ket{\psi}\in\mathbb{C}_d^{\otimes N}$ satisfying
\begin{equation}
	V(\pi)\ket{\psi}=\sgn(\pi)\ket{\psi}, \quad \forall \pi\in S_N
\end{equation}
	where $\sgn(\pi)$ is the sign\footnote{That is, $\sgn(\pi)=+1$ when $\pi$ can be written as a composition of even transpositions (pairwise permutation) and $\sgn(\pi)=-1$ when $\pi$ can be written as a composition of odd transpositions.} of the permutation $\pi$. When $d=N$, the antisymmetric subspace of $\mathbb{C}_d^{\otimes N}$ is one-dimensional, and it is spanned by a single vector, known as the totally antisymmetric state,
\begin{align}
	\ket{\psi_{A_d}}:=\frac{1}{\sqrt{N!}}\sum_{\pi\in S_N} \sgn(\pi)  \ket{\pi(1)}\otimes\ket{\pi(2)}\otimes\ldots\otimes\ket{\pi(d)},
\end{align}
where $\{\ket{1},\ket{2},\ldots,\ket{d}\}$ is the computational basis  for $\mathbb{C}_d$. When $N=d$, the totally antisymmetric state is known to satisfy 
\begin{equation}
U_d^{\otimes d}\ket{\psi_{A_d}}=\ket{\psi_{A_d}}, \quad \forall U_d\in\SU(d),
\end{equation}
which is equivalent to
\begin{equation} \label{eq:total_anti}
\Big(\id_d\otimes U_d^{\otimes (d-1)}\Big) \ket{\psi_{A_d}}=\Big(U_d^\dagger \otimes \id_d^{\otimes(d-1)} \Big) \ket{\psi_{A_d}}.
\end{equation}

The transposition vector $\ket{\psi}\in\mathbb{C}_d$ on the computational basis is a linear map from $\mathbb{C}_d$ to its dual space defined via $\ket{i}^T:=\bra{i}$. One can then verify that $\big(U_d\ket{\psi}\big)^T=\bra{\psi}\;U_d^T$. When $\ket{\psi}_{12}\in\mathbb{C}_d\otimes\mathbb{C}_{d'}$ is a bipartite vector which can be written as $\ket{\psi}_{12}=\sum_{ij} \gamma_{ij}\ket{i}_1\otimes\ket{j}_2$ in the computational basis, the partial transpose on the first linear space reads
\begin{align}
    \ket{\psi}_{12}^{T_1}:&=\sum_{ij} \gamma_{ij}\ket{i}_1^T\otimes \ket{j}_2 \\
    &=\sum_{ij} \gamma_{ij}\bra{i}_1\otimes \ket{j}_2 . 
\end{align}

Hence, doing the partial transposition on the first linear space of Eq.~\eqref{eq:total_anti},  we obtain
\begin{align} \label{eq:total_anti2}
\id_d\otimes U_d^{\otimes (d-1)}  \ket{\psi_{A_d}}^{T_1} &=\Big(\id_d\otimes U_d^{\otimes (d-1)}  \ket{\psi_{A_d}}\Big)^{T_1} \\
&=\Big(U_d^\dagger \otimes \id_d^{\otimes(d-1)} \; \ket{\psi_{A_d}}\Big)^{T_1} \\
&=\ket{\psi_{A_d}}^{T_1} \Big(\overline{U_d}\otimes\id_d^{\otimes (d-1)} \Big).
\end{align}
	With Eq.~\eqref{eq:total_anti2}, we can now recover the result of Ref.~\cite{miyazaki17} stating that when $k=d-1$, it is possible to transform $U_d^{\otimes (d-1)}$ into $\overline{U_d}$ exactly. We now define the linear transformation
	$V:\mathbb{C}_d\to\mathbb{C}_d^{\otimes (d-1)}$ as,
 \small
\begin{align}
	V:=&\frac{\sqrt{d!}}{\sqrt{(d-1)!}}\ket{\psi_{A_d}}^{T_1} \\
	=&\frac{1}{\sqrt{(d-1))!}}\sum_{\pi\in S_N}\sgn(\pi)\, \bra{\pi(1)}\otimes\ket{\pi(2)}\otimes\ldots\otimes\ket{\pi(d)}  .
\end{align}
\normalsize
Note that $V$ is an isometry since $V^\dagger V = \id_d$. Now, in order to transform $k=d-1$ uses of $U_d$ into $\overline{U_d}$, we simply perform  $V$ before $U_d^{\otimes (d-1)}$, and the operator $V^\dagger$ after to obtain
\begin{align}
	V^\dagger \; U_d^{\otimes (d-1)} \; V=\overline{U_d}.
\end{align}

	In the following, we move to the case $k<d-1$. The identification of the optimal circuit for this scenario is the main finding of this work.
	The case $k<d-1$ turns out to be more involved, as we need to consider scenarios where the antisymmetric subspace is not one-dimensional. When $k+1<d$, the antisymmetric subspace of $\mathbb{C}_d^{\otimes (k+1)}$ is spanned by a set of $\binom{d}{k+1}:=\frac{d!}{(k+1)!(d-k-1)!}$ orthogonal vectors given by $\ket{\psi^{(i)}_{A_{d,k+1}}}$. These vectors may be explicitly constructed by a simple method: choose $k+1$ vectors from $\{\ket{i}\}_{i=1}^d$ and add the permutation-signed sum over all $k+1$ permutations. For the sake of concreteness we set the vector with index $i=1$ as 
\begin{equation}\begin{split}
\ket{&\psi^{(1)}_{A_{d,k+1}}}:=\frac{1}{\sqrt{(k+1)!}} \\ 
&\sum_{\pi\in S_{k+1}} \sgn(\pi) \ket{\pi(1)}\otimes\ket{\pi(2)}\otimes\ldots\otimes\ket{\pi(k+1)}.
\end{split}
\end{equation}
	Since $k+1<d$, other orthogonal antisymmetric vectors can be constructed -- for instance
 \small
\begin{equation}\begin{split}
\ket{&\psi^{(2)}_{A_{d,k+1}}}:=\\
&\frac{1}{\sqrt{(k+1)!}}\sum_{\pi\in S_{k+1}} \sgn(\pi) 
\ket{\pi(2)}\otimes\ket{\pi(3)}\otimes \ldots\otimes\ket{\pi(k+2)} \,
\end{split}
\end{equation}
\normalsize
which also respects $V(\pi)\ket{\psi^{(2)}_{A_{d,k+1}}}=\sgn(\pi)\ket{\psi^{(2)}_{A_{d,k+1}}}$ for every permutation $\pi\in S_{k+1}$.
	The set of $\binom{d}{k+1}$ vectors of this form then span the whole antisymmetric space of $\mathbb{C}_d^{\otimes (k+1)}$.
We can now write the projector onto the antisymmetric space as,	
\begin{align}
	A(d,k+1)=\sum_{i=1}^{\binom{d}{k+1}} \ketbra{\psi^{(i)}_{A_{d,k+1}}}{\psi^{(i)}_{A_{d,k+1}}},
\end{align}
which may also be equivalently written as \cite{harrisBook}
\begin{align}
	A(d,k+1)=\sum_{\pi\in S_{k+1}} \sgn({\pi}) V(\pi) .
\end{align}
We are now in position to present the optimal circuit for unitary complex conjugation. Our circuit does not make use of any auxiliary space. The action of the encoder channel 
$\map{E}:\L(\H_P)\to\L(\H_{\bm{I}})$, with $\H_P\cong\mathbb{C}_d$, and $\H_{\bm{I}}\cong\mathbb{C}_d^{\otimes k}$
may be equivalently expressed in the Choi or operator-sum representation, which are respectively given by
\begin{align} \label{eq:E}
	\map{E}(\rho):=& \tr_P\Big((\rho^T_P \otimes \id_{\bm{I}}) \; E_{P\bm{I}} \Big) \\
				=& \sum_{i=1}^{\binom{d}{k+1}} K_i \rho K_i^\dagger ,
\end{align}
where $\tr_P$ denotes the trace over the past space $\mathcal{H}_P$, and $E\in\L(\H_P\otimes\H_{\bm{I}}),$
\begin{align}
    E:=\frac{d}{\binom{d}{k+1}} A(d,k+1)
\end{align} is the Choi operator of the map $\map{E}$ and 
\begin{align}
K_i:= \sqrt{\frac{d}{\binom{d}{k+1}}} \ket{\psi^{(i)}_{A_{d,k+1}}}_{P\bm{I}}^{T_P}
\end{align} are the Kraus operators of $\map{E}$.
The decoder channel 
$\map{D}:\L(\H_{\bm{O}})\to\L(\H_F)$, where $\H_{\bm{O}}\cong\mathbb{C}_d^{\otimes k}$ and $\H_F\cong\mathbb{C}_d$,
may be equivalently expressed in the Choi  or operator-sum representation, given by
\begin{align} \label{eq:D}
	\map{D}(\rho):=& \tr_{\bm{O}}\Big((\rho^T_{\bm{O}} \otimes \id_F) \; D_{\bm{O}F} \Big) \\
=& \sum_{i=1}^{\binom{d}{k+1}} K_i^\dagger \rho_{\bm{O}} K_i + \tr\Big(\big(\id_d^{\otimes k}- A(d,k)\big)\;\rho\Big)\; \sigma_F .
\end{align}
\normalsize
Here,  $D\in\L(\H_{\bm{O}}\otimes\H_F)$,
\begin{equation}
	D:=\frac{\binom{d}{k}}{\binom{d}{k+1}} A(d,k+1)_{\bm{O}F}+\Big(\id_d^{\otimes k}- A(d,k)\Big)_{\bm{O}}\otimes \sigma_F,
\end{equation}
is the Choi operator of the map $\map{D}$
and $\sigma\in\L(\H_F)\cong\L(\mathbb{C}_d)$ is an arbitrary quantum state, \textit{i.e.,} $\sigma\geq0$ and $\tr(\sigma)=1$.

In Sec.~\ref{sec:proof}, we prove that the circuit with the encoder described in Eq.~\eqref{eq:E} and the decoder of Eq.~\eqref{eq:D} achieves optimal unitary complex conjugation. More precisely, we will show that for any unitary operation $\map{U}_d$ the fidelity between $\map{D}\circ \map{U}_d^{\otimes k} \circ \map{E}$ and $\map{\overline{U}}_d$ is given by
\begin{equation}
	F\Big(\map{D}\circ \map{U}_d^{\otimes k} \circ \map{E} , \map{\overline{U}}_d\Big)=\frac{k+1}{d(d-k)} .
\end{equation}
Moreover, we show that no other protocol can perform better: for every possible choice of encoder $\map{E}$ and decoder $\map{D}$ the average channel fidelity satisfies
\small
\begin{equation}
    \int_{U_d\in \SU(d)}  F\Big(\map{D} \circ \big(\map{U}_d^{\otimes k}\otimes \map{\id}\big)\circ \map{E}	, \, \map{\overline{U}}_d\Big)\,\dif U_d\leq \frac{k+1}{d(d-k)}.
\end{equation}
\normalsize
As discussed earlier, the optimal average fidelity coincides with the optimal worst-case fidelity since the complex conjugation is a homomorphism. Also, for homomorphic transformations, protocols which are optimal with respect to average fidelity are also optimal with respect to the robustness to white noise \cite{quintino21unitary}.

\section{Proof of optimality} \label{sec:proof}

\subsection{Quantum superchannels}

	A useful mathematical framework for analysing transformations between quantum operations is given by the formalism of quantum superchannels, also referred to as 1-slot quantum combs or quantum supermaps \cite{chiribella07,chiribella08,chiribella09networks}.

\begin{definition}
\label{def1}
	A linear operator $S\in\L(\H_P\otimes\H_{\bm{I}}\otimes\H_{\bm{O}}\otimes\H_F)$ is a parallel superchannel if it can be written as
	\begin{equation}
		S= \tr_\text{aux} \Big((E_{P\bm{I}\text{aux}}^{T_\text{aux}}\otimes \id_{\bm{O}F})\, (\id_{PI}\otimes D_{\text{aux}\bm{O}F})\Big),
	\end{equation}
	where $\H_\text{aux}$ is an arbitrary finite-dimensional linear space, $^{T_{\text{aux}}}$ is the partial transpose in the $\H_\text{aux}$ space, and $E\in \L(\H_P\otimes  \H_{\bm{I}}\otimes \H_{\text{aux}} )$ and 	$D\in \L(\H_{\text{aux}}\otimes\H_{\bm{O}}\otimes\H_F)$ are operators such that,
\begin{align}
		& E\geq0, \quad \tr_{{\bm{I}\text{aux}}}(E) = \id_{P} \\
   & D\geq0, \quad \tr_{F}(D) = \id_{\text{aux}\bm{O}}.
\end{align}
	That is, $E$ is the Choi operator of a quantum channel from $\L(\H_P)$ to $\L(\H_{\bm{I}}\otimes \H_\text{aux})$
	and $D$ is the Choi operator of a quantum channel from $\L(\H_\text{aux}\otimes\H_{\bm{O}})$ to $\L(\H_F)$.
\end{definition}

	The concept of superchannel presented in Def.~\ref{def1} is useful because it incorporates every possible quantum circuit based on encoder-decoder schemes~\cite{chiribella08,chiribella09networks}: a positive semidefinite operator $S\geq0$, $S\in\L(\H_P\otimes\H_{\bm{I}}\otimes\H_{\bm{O}}\otimes\H_F)$ is a parallel superchannel if and only if it there exists a positive semidefinite operator $C\geq0$ $C\in\L(\H_P\otimes\H_{\bm{I}})$ such that,
\begin{align} \label{eq:superchannel_characterised}
	\tr_{F}(S)&=C_{P{\bm{I}}}\otimes \id_{\bm{O}}, \\
	\tr_{\bm{I}}(C)&=\id_P
\end{align}
that is, $C$ is the Choi operator of a quantum channel from $\L(\H_P)$ to $\L(\H_{\bm{I}})$. The decomposition of Eq.~\eqref{eq:superchannel_characterised} can be used to explicitly construct an encoder and decoder channel \cite{chiribella08,chiribella09networks}. For that, we set the auxiliary space as $\H_\text{aux}:=\H_{P'}\otimes\H_{{\bm{I}}'}$ where $\H_{P'}$ and $\H_{{\bm{I}}'}$ are isomorphic $\H_{P}$ and $\H_{{\bm{I}}}$ and define 
\footnotesize
\begin{align}
E:=& \Big(\sqrt{C_{P{\bm{I}}}}\otimes\id_{P'{\bm{I}}'}\Big) \dketbra{\id}{\id}_{PP'}\otimes\dketbra{\id}{\id}_{{\bm{II}}'}\Big(\sqrt{C_{P{\bm{I}}}}\otimes\id_{P'{\bm{I}}'}\Big)^\dagger \\
D:=& \Big(\sqrt{C_{P'{\bm{I}}'}}^{-1}\otimes\id_{\bm{O}F}\Big) S_{P'{\bm{I}}'\bm{O}F}\Big(\sqrt{C_{P'{\bm{I}}'}}^{-1}\otimes\id_{\bm{O}F}\Big)^\dagger,
\end{align}
\normalsize
where $\sqrt{C}$ is the unique positive semidefinite square root of $C$ and $\sqrt{C_{P'{\bm{I}}'}}^{-1}$ is the Moore-Penrose pseudoinvese defined on the support of $\sqrt{C_{P'{\bm{I}}'}}$. 
By using the relation $C\otimes\id\dket{\id}=\id\otimes C^T\dket{\id}$ and the cyclic property of the trace, we verify that $S= \tr_\text{aux} \Big((E_{P{\bm{I}}\text{aux}}^{T_\text{aux}}\otimes \id_{\bm{O}F})\, (\id_{P{\bm{I}}}\otimes D_{\text{aux}{\bm{O}}F})\Big)$.

The definition of parallel superchannels prove convenient to investigate the action of parallel superchannels on an arbitrary input channel $\map{C}:\H_{\bm{I}}\to\H_{\bm{O}}$. In particular, let $C_{\bm{IO}}\in\L(\H_{\bm{I}}\otimes\H_{\bm{O}})$ be the Choi operator of $\map{C}$, Refs.~\cite{chiribella07,chiribella09networks,ebler16,quintino21unitary} show that
\begin{align} \label{eq:link}
    \tr_{\bm{IO}}\Big(S \; \left(\id_P\otimes C_{\bm{IO}}^T \otimes \id_F\right) \Big) \in\L(\H_P\otimes\H_F)
\end{align}
is the Choi operator of 
\begin{align} \label{eq:link2}
    \map{D} \circ \Big(\map{C}\otimes \map{\id_\text{aux}}\Big)\circ \map{E} : \L(\H_P)\to\L(\H_F).
\end{align}

\subsection{The performance operator}
\label{subsec:performance_operator}
A useful tool to evaluate the performance of superchannels for transforming $k$ uses of a unitary operation $U_d$ into $f(U_d)$ is given by the \textit{performance operator}\footnote{Here we use the definition of Ref.~\cite{quintino21unitary}, which is differs from the definition of Ref.~\cite{ebler16} by a global transposition. We emphasise that for solve opmisation problems involving superchannels, the definitions from \cite{ebler16} and \cite{quintino21unitary} are fully equivalent since $S$ is a superchannel if and only if $S^T$ is a superchannel. }
~\cite{ebler16,quintino21unitary}, defined as
\begin{align}\label{eq:OMEGA}
		\Omega:=	\frac{1}{d^2}\int_{U_d\in\SU(d)}  \dketbra{{f(U_d)}}{{f(U_d)}}_{PF}\otimes\dketbra{U_d^{\otimes k}}{U_d^{\otimes k}}_{\bm{IO}}^T \dif U_d.
\end{align}
From Eq.~\eqref{eq:link} and Eq.~\eqref{eq:link2} we see that using a superchannel $S$ with encoder channel $\map{E}$ and decoder channel $\map{D}$, the average fidelity can be written as
\small
\begin{align}\label{eq:OMEGAisGOOD}
\tr(S\Omega)=& \frac{1}{d^2}\int \tr \left(S\; \dketbra{{f(U_d)}}{{f(U_d)}}_{PF}\otimes\dketbra{U_d^{\otimes k}}{U_d^{\otimes k}}_{\bm{IO}}^T\right)\dif U_d  \\
=& \int \frac{1}{d^2} \tr_{PF}\Big(\dketbra{{f(U_d)}}{{f(U_d)}}_{PF} \\ 
\nonumber  & \left[\tr_{\bm{IO}}\left(S\; \id_P\otimes\dketbra{U_d^{\otimes k}}{U_d^{\otimes k}}_{\bm{IO}}^T \otimes \id_F\right)\right]\Big)\dif U_d  \\
	=&\int F\Big(\map{D}\circ \map{U}_d^{\otimes k} \circ \map{E} , \, \map{f({U_d})}\Big)\dif U_d  .
\end{align}
\normalsize
Hence, when the performance operator $\Omega$ is given, by combining Eq.~\eqref{eq:OMEGAisGOOD} with the superchannel characterisation given by Eq.~\eqref{eq:superchannel_characterised}, the problem of finding the optimal average fidelity with a parallel superchannel reads
\begin{align}
	&\max{ \tr(S \Omega)}\label{prob:primal1} \\
	\text{s.t. :} \quad & 
	S\geq0 \label{prob:primal2}\\
	&\tr_F(S)= C_{P\bm{I}} \otimes \id_{\bm{O}} \label{prob:primal3}\\
	&\tr_{\bm{I}}(C) = \id_{P}. \label{prob:primal4}
\end{align}
Reference~\cite{ebler16,quintino21unitary} shows that the dual optimisation problem is given by
\begin{align}
	&\min{ c} \label{prob:dual1}\\
	\text{s.t. :} \quad & 
	\Omega \leq  c \hat{S} \label{prob:dua2}\\
	& \hat{S}=W_{P\bm{IO}}\otimes \id_F \label{prob:dual3}\\
    & \tr_{\bm{O}}(W)=\rho_P \otimes \id_{\bm{I}} \label{prob:dual4}\\
    & \tr(\rho)=1. \label{prob:dual5}
\end{align}
As the notation suggests, $\rho \in \L(\H_P)$ represents a quantum state, $W\in\L(\H_P\otimes\H_{\bm{I}}\otimes\H_{\bm{O}})$ is a causally ordered process~\cite{chiribella09networks,oreshkov11}, and $\hat{S}$ lies in the dual affine set of (parallel) superchannels~\cite{ebler16,bavaresco21}. By definition of the dual problem, any feasible point for the dual problem constitutes as an upper bound for the primal. Additionally, for this particular problem we have strong duality, which means that the optimal value for the dual problem coincides with the optimal value for the primal problem. 

Strong duality can be shown by making use of the Slater conditions, which state that, if there exists an operator which satisfy the constraints of the primal (or dual) problem strict, the optimisation problem has strong duality. For our particular problem, it is then enough to find an operator $S\in\L(\H_P\otimes\H_{\bm{I}}\otimes\H_{\bm{O}}\otimes\H_F)$ which is strictly positive $S>0$, $\tr_F(S)= C_{P\bm{I}} \otimes \id_{\bm{O}}$, and $\tr_{\bm{I}}(C) = \id_{P}$. One such choice which trivially satisfy such strictly properties is $S=\id_P\otimes\frac{\id_{\bm{O}}}{d_{\bm{I}}}\otimes\id_{\bm{O}}\otimes\frac{\id_F}{d_F}$.

Since $\dketbra{U_d}{U_d}^T=\dketbra{\overline{U}_d}{\overline{U}_d}$, the performance operator for transforming $k$ calls of a unitary $U_d$ into its complex conjugate $U_d$ reads as
\small
\begin{align}\label{eq:OMEGAconj}
\Omega=&\frac{1}{d^2}\int_{U_d\in\SU(d)}  \dketbra{\overline{U}_d}{\overline{U}_d}_{PF}\otimes\dketbra{U_d^{\otimes k}}{U_d^{\otimes k}}_{\bm{IO}}^T \dif U_d  ,\\
=&\frac{1}{d^2}\int_{U_d\in\SU(d)}  \dketbra{\overline{U}_d}{\overline{U}_d}_{PF}\otimes\dketbra{\overline{U}_d^{\otimes k}}{\overline{U}_d^{\otimes k}}_{\bm{IO}} \dif U_d  ,\\
=&\frac{1}{d^2}\int_{U_d\in\SU(d)}  \dketbra{{U_d}}{{U_d}}_{PF}\otimes\dketbra{{U_d^{\otimes k}}}{{U_d^{\otimes k}}}_{\bm{IO}} \dif U_d  \\
=&\frac{1}{d^2}\int 
\Big(\id_P\otimes \id_{\bm{I}}^{\otimes k}\otimes {(U_d^{\otimes k})_{\bm{O}}}\otimes{(U_d)}_F\Big) \\ \nonumber
& \hspace{0.1mm}
\dketbra{{\id}}{{\id}}_{PF}\otimes\dketbra{{\id}}{{\id}}_{\bm{IO}} 
\Big(\id_P\otimes \id_{\bm{I}}^{\otimes k}\otimes {(U_d^{\otimes k})_{\bm{O}}}\otimes {(U_d)}_F\Big)^\dagger \dif U_d  .
\end{align}
\normalsize
Hence, the performance operator respects the commutation relation
\begin{align}
    [\Omega, \id_P\otimes \id_{\bm{I}}^{\otimes k}\otimes {(U_d^{\otimes k})_{\bm{O}}}\otimes {(U_d)}_F]=0, \quad \forall U_d\in\SU(d),
\end{align}
and since $\id\otimes U_d\ketbra{\id}{\id}=U_d^T\otimes \id\ketbra{\id}{\id}$, we also have 
\begin{align} \label{eq:commute2}
    [\Omega, {(U_d)}_P\otimes{(U_d^{\otimes k})}_{\bm{I}}\otimes {\id}_{\bm{O}}^{\otimes k}\otimes \id_F]=0, \quad \forall U_d\in\SU(d).
\end{align}
Now, let $\{B^i\}_i$ be an orthonormal basis for the subspace spanned by operators commuting with $U_d^{\otimes (k+1)}$, \textit{i.e.,} $\tr(B^i{B^j}^\dagger)=d_i\delta_{ij}$, and  $d_i:=\tr(B^i {B^i}^{\dagger})$. Eq.~(\ref{eq:commute2}) ensures that we can write the performance operator as
\begin{align}
    \Omega=\sum_i B^i_{P{\bm{I}}}\otimes {\Omega^i_{\bm{O}F}}
\end{align}
for some set of operators $\Omega^i_{\bm{O}F}\in \L(\H_{\bm{O}}\otimes\H_F)$. As noticed in Ref.~\cite{quintino21unitary}, we can find such operators by using the orthogonality relation, $\tr(B^i{B^j}^\dagger)=d_i\delta_{ij}$. More explicitly, by making use of $\tr_{2}\dketbra{\id}{\id}=\id_1$ we can write
\footnotesize
\begin{align}
    d_i\Omega^i_{\bm{O}F}=&\tr_{P\bm{I}} \Big[\big( {(B_{P\bm{I}}^{i})^{\dagger}} \otimes \id_{\bm{O}F}\big) \Omega \Big] \\
=&\frac{1}{d^2}\int  \tr_{P\bm{I}} 
\Big[\big( (B_{P\bm{I}}^i)^\dagger \otimes \id_{\bm{O}F}\big) \\ 
\nonumber &
\Big(\id_P\otimes \id_{\bm{I}}^{\otimes k}\otimes {(U_d^{\otimes k}})_{\bm{O}}\otimes {U_d}_F\Big) \\ 
\nonumber &
\dketbra{{\id}}{{\id}}_{PF}\otimes\dketbra{{\id}}{{\id}}_{\bm{IO}} 
\Big(\id_P\otimes \id_{\bm{I}}^{\otimes k}\otimes {(U_d^{\otimes k}})_{\bm{O}}\otimes {U_d}_F\Big)^\dagger \dif U_d 
\Big] \\
=&\frac{1}{d^2}\int  \tr_{P\bm{I}} 
\Big[
\Big(\id_P\otimes \id_{\bm{I}}^{\otimes k}\otimes {(U_d^{\otimes k}})_{\bm{O}}\otimes {U_d}_F\Big) \\ 
\nonumber &
\big( \id_{P\bm{I}}\otimes \overline{B}_{F\bm{O}}^i \big)
\dketbra{{\id}}{{\id}}_{PF}\otimes\dketbra{{\id}}{{\id}}_{\bm{IO}}\\ 
\nonumber &
\Big(\id_P\otimes \id_{\bm{I}}^{\otimes k}\otimes (U_d^{\otimes k})_{\bm{O}}\otimes {U_d}_F\Big)^\dagger \dif U_d 
\Big] \\
=&\frac{1}{d^2}\int 
\Big((U_d^{\otimes k})_{\bm{O}}\otimes {U_d}_F\Big)
 \overline{B}_{F\bm{O}}^i 
\Big((U_d^{\otimes k})_{\bm{O}}\otimes {U_d}_F\Big)^\dagger \dif U_d  \\
=&\frac{1}{d^2}\int 
\Big(\overline{(U_d^{\otimes k})_{\bm{O}}\otimes {U_d}_F}\Big)
\overline{B}_{F\bm{O}}^i 
\Big(\overline{(U_d^{\otimes k})_{\bm{O}}\otimes {U_d}_F}\Big)^\dagger \dif U_d  \\
=&\frac{1}{d^2}\int 
\overline{B}_{F\bm{O}}^i 
\Big(\overline{(U_d^{\otimes k})_{\bm{O}}\otimes {U_d}_F}\Big)
\Big(\overline{(U_d^{\otimes k})_{\bm{O}}\otimes {U_d}_F}\Big)^\dagger \dif U_d  \\
=&\frac{1}{d^2}\int 
\overline{B}_{F\bm{O}}^i \dif U_d  \\
=&\frac{1}{d^2} 
\overline{B}_{F\bm{O}}^i.   
\end{align}
This yields the following form of the performance operator for unitary complex conjugation
\normalsize
\begin{align}
\label{eq:Omega_basis}
  \Omega =  \frac{1}{d^2}\sum_i \frac{B^i_{P{\bm{I}}}\otimes \overline{B}^i_{F\bm{O}}}{d_i}.
\end{align}

\subsection{Evaluating the performance of the superchannel presented in Sec.~\ref{sec:construction}}
\label{sec:primal_solution}

In this section, we show that the superchannel presented in Sec.~\ref{sec:construction} attains optimal performance. The proof of its optimality is presented in subsection~\ref{subsec:upper_bound}.

    The superchannel $S\in\L(\H_P\otimes\H_{\bm{I}}\otimes\H_{\bm{O}}\otimes\H_F)$ described in Eq.~\eqref{eq:E} and Eq.~\eqref{eq:D} reads as
\begin{align} \label{eq:best_superchannel}
    S:=&E_{P\bm{I}} \otimes D_{\bm{O}F} \\
    E:=&\frac{d}{\binom{d}{k+1}} A(d,k+1) \\
    D:=&\frac{\binom{d}{k}}{\binom{d}{k+1}} A(d,k+1)+\Big(\id_d^{\otimes k}- A(d,k)\Big)\otimes \sigma, 
\end{align}
where $A(d,k)$ is the projector onto the antisymmetric space of $\mathbb{C}_d^{\otimes k}$ and $\sigma$ is an arbitrary quantum state.

\begin{lemma} \label{Lemma:1}
The quantum circuit described in Sec.~\ref{sec:construction} transforms $k$ uses of a $d$-dimensional unitary operation $U_d$ into its complex conjugation $\overline{U_d}$ with an average fidelity respecting ${\mean{F}\geq\frac{k+1}{d(d-k)}}$.
\end{lemma}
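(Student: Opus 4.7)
The plan is to compute the pointwise fidelity $F(U_d) := F\bigl(\map{D}\circ\map{U}_d^{\otimes k}\circ\map{E},\,\map{\overline{U}}_d\bigr)$ for an arbitrary fixed $U_d$ and to show it already equals the target value $\frac{k+1}{d(d-k)}$; since the result will turn out to be independent of $U_d$, the Haar average coincides with this pointwise value and the lemma follows (in fact, with equality). The opening move is to combine the link-product identity \eqref{eq:link}--\eqref{eq:link2} with the fidelity formula \eqref{eq:channel_fidelity} to write $d^2 F(U_d) = \langle\phi(U_d)|(E_{P\bm{I}}\otimes D_{\bm{O}F})|\phi(U_d)\rangle$ with $|\phi(U_d)\rangle := \dket{\overline{U}_d}_{PF}\otimes \dket{\overline{U}_d^{\otimes k}}_{\bm{IO}}$, and then, using $\dket{\overline{U}_d} = (\id\otimes\overline{U}_d)\dket{\id}$ on each factor, push the unitary dressing off $|\phi\rangle$ and onto $D$; one obtains $d^2 F(U_d) = \langle\phi(\id)|(E\otimes \tilde D(U_d))|\phi(\id)\rangle$, where $\tilde D(U_d) := (U_d^T)^{\otimes(k+1)} D\,(\overline{U}_d)^{\otimes(k+1)}$ with $\bm{O}F$ viewed as a $(k+1)$-party space.

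Next, split $D = D_1 + D_2$ along the two terms of Eq.~\eqref{eq:D}. Because $A(d,k+1)$ commutes with $V^{\otimes(k+1)}$ for every $V$ and $U_d^T\overline{U}_d = \id$, the conjugation leaves $D_1$ invariant: $\tilde D_1(U_d) = D_1$. Similarly, since $A(d,k)$ commutes with $V^{\otimes k}$, one gets $\tilde D_2(U_d) = (\id - A(d,k))_{\bm{O}}\otimes (U_d^T\sigma\,\overline{U}_d)_F$. To show the $\tilde D_2$-contribution vanishes, contract the two $\dket{\id}$ factors in $|\phi(\id)\rangle$ (equivalently, identify $F\leftrightarrow P$ and $\bm{O}\leftrightarrow \bm{I}$): the inner product becomes a trace on $\H_P\otimes\H_{\bm{I}}$ of the form $\tr\bigl[A(d,k+1)_{P\bm{I}}\,(\tau_P\otimes(\id-A(d,k))_{\bm{I}})\bigr]$ for a single-qudit operator $\tau$, which is zero because any totally antisymmetric vector on $k+1$ parties is antisymmetric under any subset of $k$ of them, so that $A(d,k+1)_{P\bm{I}}\bigl(\id_P\otimes(\id - A(d,k))_{\bm{I}}\bigr)=0$ — and hence the contribution of the free state $\sigma$ drops out entirely.

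The $\tilde D_1$-contribution is then handled by the same contraction, using permutation-invariance of $A(d,k+1)$ to identify its copy on $\bm{O}F$ with its copy on $P\bm{I}$; it reduces to $\frac{d}{\binom{d}{k+1}}\cdot\frac{\binom{d}{k}}{\binom{d}{k+1}}\cdot\tr[A(d,k+1)^2]$. Since $A(d,k+1)$ is a rank-$\binom{d}{k+1}$ projector and $\binom{d}{k}/\binom{d}{k+1}=(k+1)/(d-k)$, this collapses to $\frac{d(k+1)}{d-k}$; dividing by $d^2$ gives $F(U_d)=\frac{k+1}{d(d-k)}$ for every $U_d$, hence $\mean{F}=\frac{k+1}{d(d-k)}$. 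The main obstacle I anticipate is the subsystem bookkeeping in the $\dket{\id}$-contractions: $A(d,k+1)$ plays a dual role on $(P,\bm{I})$ and on $(\bm{O},F)$, and cleanly exploiting its permutation-invariance under the $F\leftrightarrow P$, $\bm{O}\leftrightarrow\bm{I}$ identification is what reduces everything to the two elementary facts $\tr A(d,k+1)=\binom{d}{k+1}$ and the restriction identity $A(d,k+1)_{P\bm{I}}(\id_P\otimes A(d,k)_{\bm{I}})=A(d,k+1)_{P\bm{I}}$.
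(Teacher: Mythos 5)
Your proposal is correct, and it takes a genuinely different route from the paper's proof. The paper works with the Haar-averaged performance operator $\Omega$, expands it in an orthonormal basis $\{B^i\}$ of the commutant of $U_d^{\otimes(k+1)}$ via Eq.~\eqref{eq:Omega_basis}, and observes that since $E\propto A(d,k+1)=B^1$ the orthogonality $\tr(A(d,k+1)B^i)=0$ for $i>1$ kills all but one term in $\tr(\Omega S)$; the $\sigma$-dependent cross term is then simply dropped using positivity of $\id-A(d,k)$, which is why the paper's lemma only asserts $\mean{F}\geq\frac{k+1}{d(d-k)}$ and defers exact equality to a remark. You instead bypass the Haar integral and the commutant machinery entirely: you compute the pointwise fidelity for a fixed $U_d$, strip the unitary dressing off the Choi vectors onto $D$, and use the two elementary facts that $A(d,k+1)$ commutes with $V^{\otimes(k+1)}$ (so $\tilde D_1=D_1$) and that $A(d,k+1)_{P\bm I}\bigl(\id_P\otimes(\id-A(d,k))_{\bm I}\bigr)=0$ (so the $\sigma$ term vanishes identically, not merely contributes nonnegatively). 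Your bookkeeping is sound: the cyclic mismatch between the orderings $(P,I_1,\dots,I_k)$ and $(O_1,\dots,O_k,F)$ under the $\dket{\id}$-contractions is absorbed by permutation invariance and symmetry of $A(d,k+1)$, and the final arithmetic $\binom{d}{k}/\binom{d}{k+1}=(k+1)/(d-k)$ is right. What your approach buys is a more elementary argument and a strictly stronger conclusion — the fidelity is constant over $\SU(d)$, so the worst-case fidelity already equals $\frac{k+1}{d(d-k)}$ with equality; what the paper's approach buys is that the same $\Omega$-decomposition is reused verbatim in the dual (upper-bound) argument of Sec.~\ref{subsec:upper_bound}, so the two halves of the optimality proof share their setup.
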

\begin{proof}
We prove the theorem by showing that
\begin{equation}
    \tr(\Omega\,S)=\frac{k+1}{d(d-k)}
\end{equation}
with $\Omega$ being the performance operator for the unitary complex conjugation transformation
\begin{align}
\Omega=	\int_{U_d\in\SU(d)}  \dketbra{U_d}{U_d}_{PF}\otimes\dketbra{U_d^{\otimes k}}{U_d^{\otimes k}}_{\bm{IO}} \dif U_d  ,
\end{align}
and $S$ the superchannel described in Eq.~\eqref{eq:best_superchannel}.

As shown in Eq.~\eqref{eq:Omega_basis} and Ref.~\cite{quintino21unitary}, the performance operator for unitary complex conjugation may be written as
\begin{align}
  \Omega =  \sum_i \frac{1}{d^2}\frac{B^i_{P\bm{I}}\otimes \overline{B^i_{F\bm{O}}}}{d_i}
\end{align}
where $\{B^i\}_i$ is an orthonormal basis, \textit{i.e.,} $\tr(B^i {B^j}^{\dagger})=\delta_{ij} d_i$, for the subspace spanned by operators commuting with all unitary operators $U^{\otimes (k+1)}$. Since the projector onto the antisymmetric space $A(d,k+1)$ respects $A(d,k+1) U^{\otimes (k+1)} = U^{\otimes (k+1)} A(d,k+1) $, we can set $A(d,k+1)$ as $B^1$, one of the operators of the orthonormal basis $\{B^i\}_i$. We can then write 
\begin{align}
  \Omega =  \frac{1}{d^2}\frac{A(d,k+1)_{P\bm{I}}\otimes A(d,k+1)_{F\bm{O}}}{\tr\big(A(d,k+1)\big)} + \frac{1}{d^2}\sum_{i>1} \frac{B^i_{P{\bm{I}}}\otimes \overline{B^i_{{F\bm{O}}}}}{d_i}.
\end{align}

Now, since $E_{P\bm{I}}=\frac{d}{\binom{d}{k+1}} A(d,k+1)$ and ${\tr(A(d,k+1)\,B^i)=0} $ for $i>1$, it holds that
\begin{align}
\tr(\Omega \, S) = \frac{1}{d^2} \frac{\tr\Big(E\,A(d,k+1)\Big)\tr\Big(D\,A(d,k+1)\Big)}{\tr\big(A(d,k+1)\big)}.
\end{align}
We now recall from Sec.~\ref{sec:construction} that $\tr\Big(A(d,k)\Big)=\binom{d}{k}$, and since $A(d,k)^2=A(d,k)$, we have that 
\small
\begin{align}
&\tr(\Omega \, S) = \frac{1}{d^2}
\frac{d\binom{d}{k} \binom{d}{k+1}\binom{d}{k+1}}{\binom{d}{k+1}\binom{d}{k+1}\binom{d}{k+1}}
\\ 
\nonumber &+ \frac{1}{d^2} \frac{d\binom{d}{k+1}}{\binom{d}{k+1}\binom{d}{k+1}}
\tr\Big(A(d,k+1)\, \big(\id_d^{\otimes k}-A(d,k)\big)\otimes \sigma \Big).
\end{align}
\normalsize
Since $\id_d^{\otimes k}-A(d,k)$ is positive semidefinite, we have that 
\small
\begin{align}
\tr(\Omega \, S) \geq& \frac{1}{d^2}
\frac{d\binom{d}{k} \binom{d}{k+1}\binom{d}{k+1}}{\binom{d}{k+1}\binom{d}{k+1}\binom{d}{k+1}} \\
=&\frac{\binom{d}{k}} {d\binom{d}{k+1}} \\
=&\frac{d!}{k!(d-k)(d-k-1)!} \frac{(k+1)k!(d-k-1)!}{d\,d!} \\
=&\frac{k+1}{d(d-k)}.
\end{align}
\normalsize
This finishes the proof.
\end{proof}
As we are going to show next, this protocol cannot be improved. Its average fidelity is exactly $\mean{F}=\frac{k+1}{d(d-k)}$ and we have that $\tr\Big(A(d,k+1)\, \big(\id_d^{\otimes k}-A(d,k)\big)\otimes \sigma \Big)=0$ for any quantum state $\sigma$.

\subsection{A tight upper bound for the unitary conjugation problem} \label{subsec:upper_bound}
To prove that the superchannel $S$ presented in section~\ref{sec:construction} is indeed optimal, we provide here a feasible solution for the dual problem defined by equations~\eqref{prob:dual1}-\eqref{prob:dual5}, and show that its solution matches with the solution of the respective primal problem from~\eqref{prob:primal1}-\eqref{prob:primal4} solved in section~\ref{sec:primal_solution}. In particular, we show that for our proposed ansatz, the value of the constant $c$ from~\eqref{prob:dual1} matches with the value from~\eqref{prob:primal1} -- reproducing the value of optimal fidelity from Theorem~\ref{thm:main}. In order to proceed, we need to introduce basic knowledge from representation theory, including the famous Schur-Weyl duality~\cite{FultonSchur,harrisBook}, Young diagrams and their relation to irreducible representation (irrep) of the symmetric group~\cite{FultonSchur,harrisBook}. We give a brief introduction in the following subsection, while the solution of the dual problem for the unitary conjugation problem is derived later.

\subsubsection{Young diagrams, Schur-Weyl duality and irreducible representations of the symmetric group}
\label{sec:irreps}
A partition $\lambda$ of a natural number $n$, which we denote as $\lambda \vdash n$, is a sequence of positive numbers $\lambda=[\lambda_1,\lambda_2,\ldots,\lambda_m]$, such that
\begin{align}
\label{eq:yng}
\lambda_1\geq \lambda_2\geq \cdots \geq \lambda_m\geq 0\qquad \sum_{i=1}^m\lambda_i=n.
\end{align}
Every partition can be visualised as a \textit{Young diagram} - a collection of boxes arranged in left-justified rows. With every Young diagram we associate it height $H(\lambda)$, which is defined as a number of boxes in the first (the longest one) column. See Figure~\ref{Fig2} for illustration of the above definitions.
\begin{figure}
\begin{center}
	\includegraphics[width=0.49\textwidth]{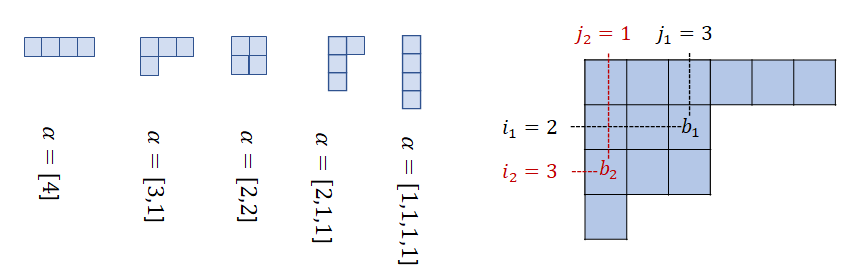}
\end{center}
\caption{The left figure presents five possible Young diagrams for $n=4$. These Young diagrams corresponds to all possible abstract irreducible representations of $S(4)$. Considering representation space $(\mathbb{C}^d)^{\otimes 4}$, there appear only irreps for which height  of the corresponding Young diagram is no larger than $d$. For example, considering qubits ($d=2$) we have only three Young diagrams: $(4),(3,1),(2,2)$. In particular no antisymmetric space exists. The right figure presents graphical example of assigning the coordinates to a box within a Young diagram. Here we have $\mu=[6,3,3,1]$ and the box $b_1$ has the coordinates $(i_1,j_1)=(2,3)$, while the box $b_2$ has the coordinates $(i_2,j_2)=(3,1)$. The axial distance between these to boxes, according to~\eqref{eq:ax_dist} is equal $d(b_1,b_2)=|i_1-i_2|+|j_1-j_2|=3$. } \label{Fig2}
\end{figure}

Now we introduce the notion of a step representation for a given Young diagram~\cite{vershik}. Let $k_1,k_2,\ldots,k_s$ be multiplicities of numbers $\lambda_i$ in the non-increasing sequence~\eqref{eq:yng}. Let us also define $p_i:=\lambda_{k_i}-\lambda_{k_{i+1}}$ for $i=1,2,\ldots,s$, where we assume $\lambda_{k_{s+1}}=0$.
Observe that 
\begin{equation}
    \sum_{1\leq i\leq j\leq s} k_i p_j = n,
\end{equation}
and a Young diagram is uniquely determined by the sequence, 
\begin{equation}
\label{e:seqks}
(k_1,p_1,k_2,p_2,\ldots,k_s,p_s).
\end{equation}
Let us name the sequence a \textit{step representation} of a diagram $\lambda$, see Figure~\ref{Fig2a}.
\begin{figure}[!h]
\begin{center}
		\includegraphics[width=0.49\textwidth]{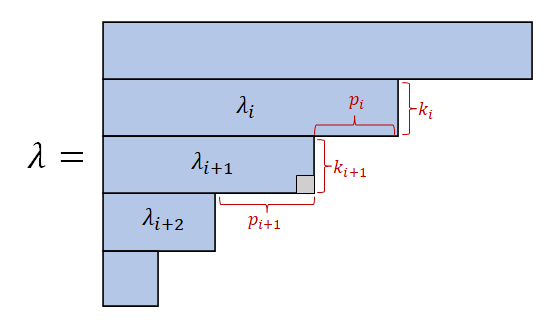}
\end{center}
\caption{The graphic presents parameters describing a step representation for an arbitrary Young diagram $\lambda$. The numbers $k_i,k_{i+1}$ count the number of appearances of row of lengths $\lambda_i,\lambda_{i+1}$ respectively, while the numbers $p_i,p_{i+1}$ are the respective differences $\lambda_{i+1}-\lambda_i, \lambda_{i+1}-\lambda_{i+2}$ between the rows lengths. By a grey square we depicted one corner box $c_{i+1}$ whose coordinates can be calculated through expression~\eqref{e:cj}.}
\label{Fig2a}
\end{figure}

For every two boxes $b_1=(i_1,j_1)$, $b_2=(i_2,j_2)$ let us define their \textit{axial distance} (see Figure~\ref{Fig2}) as:
\begin{equation}
\label{eq:ax_dist}
    d(b_1,b_2):=|i_1-i_2|+|j_1-j_2|.
\end{equation}
Let $c_1,c_2,\ldots,c_s$ be corner boxes
\begin{equation}
\label{e:cj}
    c_j=(k_1+k_2+\ldots+k_j,p_j+p_{j+1}+\ldots+p_s), \quad j=1,2,\ldots,s.
\end{equation}
For further applications let us observe that for $i<j$,
\begin{equation}
    d(c_i,c_j)=p_i+\ldots+k_j,
\end{equation}
where the above notation means the sum of all numbers occurring in the sequence \eqref{e:seqks} between $p_i$ and $k_j$ including them.

For a fixed number $n$, the number of Young diagrams determines the number of nonequivalent irreps of  $S_n$ in an abstract decomposition.   However, working in the representation space $\mathcal{H\equiv (\mathbb{C}}^{d})^{\otimes n}$, for every decomposition of $S_n$ into irreps we take Young diagrams $\lambda$ with height $H(\lambda)$ of a given $\lambda$ of at most $d$. For example, in the case of qubits ($d=2$), we allow only for diagrams up to two rows.

Suppose we have $\alpha \vdash n-1$ and $\lambda \vdash n$. Writing $\lambda \ni \alpha$ we consider such Young diagrams $\lambda$ which can be obtained from $\alpha$ by adding a single box (green colour). Similarly, writing $\alpha \in \lambda$ we consider such Young  diagrams $\alpha$, which can be obtained from $\lambda$ by removing a single box. The procedure of adding/removing a box from Young diagrams is illustrated in Figure~\ref{Fig3}.  
\begin{figure}[!h]
\begin{center}
	\includegraphics[width=0.49\textwidth]{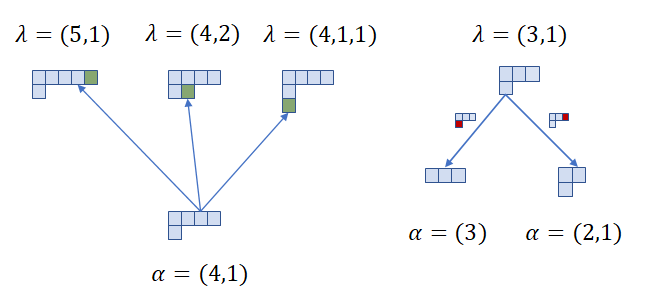}
\end{center}
\caption{The left graphic presents possible Young diagrams $\lambda \vdash 6$, which can be obtained from Young diagram $\alpha=[4,1]$ by adding a single box, depicted here by the green colour. In this particular case, by writing $\lambda \ni \alpha$, we take $\lambda$ represented only by these three diagrams. In the same manner we define subtracting of a box from a Young diagram. This is depicted on the right graphic where we remove a single box (red colour) from the diagram $\lambda=[3,1]$ and obtain two three-boxed Young diagrams $\alpha=[3]$ and $\alpha=[2,1]$. Diagrams $\alpha$ obtained from a diagram $\lambda$ by removing a single box we denote as $\alpha\in\lambda$.} \label{Fig3}
\end{figure}

Having the above definitions and notations we are in position to summarise more results on representation theory used further in this manuscript.
We start from the celebrated Schur-Weyl duality relating irreps of the general linear group $GL(d)$ and symmetric group $S_n$. 
Namely, it is known that the diagonal action of the general linear group $GL(d)$ of invertible complex matrices and of the symmetric group on $(\mathbb{C}^d)^{\otimes n}$ commute:
\begin{align}
V(\pi)(X\otimes \cdots \otimes X)=(X\otimes \cdots \otimes X)V(\pi),
\end{align}
where $\pi \in S_n$ and $X\in GL(d)$. Due to the above relation, there exists a basis called the Schur basis, producing a decomposition into irreps of $GL(d)$ and $S_n$ simultaneously. Hence, for the whole space $(\mathbb{C}^d)^{\otimes n}$ we have the following
\begin{theorem}[Schur-Weyl duality]
	\label{SW}
	The tensor product space $(\mathbb{C}^d)^{\otimes n}$ can be decomposed as
	\begin{align}
	\label{eq:SW}
	(\mathbb{C}^d)^{\otimes n}=\bigoplus_{\substack{\lambda \vdash n \\ H(\lambda)\leq d}} \mathcal{U}_{\lambda}\otimes \mathcal{S}_{\lambda},
	\end{align}
	where the symmetric group $S_n$ acts on the space $\mathcal{S}_{\lambda}$ and the general linear group $GL(d)$ acts on the space $\mathcal{U}_{\lambda}$, labelled by the same Young diagram $\lambda \vdash n$.
\end{theorem}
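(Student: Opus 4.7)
The plan is to prove Schur-Weyl duality by invoking the double commutant theorem for semisimple algebras, which provides the cleanest route to the bi-module decomposition in Eq.~\eqref{eq:SW}. I would introduce two subalgebras of $\mathrm{End}((\mathbb{C}^d)^{\otimes n})$: the algebra $\mathcal{A}$ generated by the diagonal $GL(d)$-action, i.e.\ the linear span of $\{X^{\otimes n} : X \in GL(d)\}$, and the algebra $\mathcal{B}$ generated by the permutation action, i.e.\ the linear span of $\{V(\pi) : \pi \in S_n\}$. The commutation property stated just before the theorem gives $\mathcal{A} \subseteq \mathcal{B}'$ and $\mathcal{B} \subseteq \mathcal{A}'$, where $'$ denotes the commutant.

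The central technical step would be to upgrade these inclusions to equalities, and the natural target is $\mathcal{A} = \mathcal{B}'$. The strategy is a polarization/density argument: any element of $\mathcal{B}'$ lies in the symmetric subspace $\mathrm{Sym}^n(\mathrm{End}(\mathbb{C}^d)) \subseteq \mathrm{End}(\mathbb{C}^d)^{\otimes n}$, and this symmetric subspace is linearly spanned by pure tensor powers $X^{\otimes n}$ as $X$ ranges over $\mathrm{End}(\mathbb{C}^d)$. One standard way to see this is to expand $(t_1 X_1 + \cdots + t_m X_m)^{\otimes n}$ and extract the fully mixed coefficient by differentiating in the $t_i$, exhibiting every symmetrized product of $X_i$'s as a linear combination of $X^{\otimes n}$'s. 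Zariski density of $GL(d)$ in $\mathrm{End}(\mathbb{C}^d)$ then lets me restrict $X$ to $GL(d)$ without shrinking the span, giving $\mathcal{B}' \subseteq \mathcal{A}$. I expect this polarization/density step to be the main obstacle, since it is where the concrete structure of the tensor power representation enters.

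With $\mathcal{A} = \mathcal{B}'$ established, I would next observe that $\mathcal{B}$ is semisimple: it is the image of the group algebra $\mathbb{C}[S_n]$, which is semisimple by Maschke's theorem. The double commutant theorem for semisimple subalgebras of $\mathrm{End}(V)$ then yields $\mathcal{B}'' = \mathcal{B}$, and combined with $\mathcal{A} = \mathcal{B}'$ it also gives $\mathcal{A}' = \mathcal{B}$ and that $\mathcal{A}$ itself is semisimple. The structure theorem for a pair of mutually commuting semisimple algebras acting on a finite-dimensional vector space produces a canonical decomposition
\begin{equation}
(\mathbb{C}^d)^{\otimes n} = \bigoplus_{\mu \in \widehat{\mathcal{B}}} \mathcal{U}_\mu \otimes \mathcal{S}_\mu,
\end{equation}
where $\mathcal{S}_\mu$ runs over the distinct irreducible $\mathcal{B}$-modules appearing in $(\mathbb{C}^d)^{\otimes n}$, and each $\mathcal{U}_\mu := \mathrm{Hom}_{\mathcal{B}}(\mathcal{S}_\mu, (\mathbb{C}^d)^{\otimes n})$ is automatically an irreducible $\mathcal{A}$-module, all pairwise inequivalent.

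To finish, I would identify the labelling set. The irreducible representations of $S_n$ are classically indexed by partitions $\lambda \vdash n$ via Young symmetrizers, so $\mu = \lambda$ is a Young diagram. To see which diagrams actually occur on $(\mathbb{C}^d)^{\otimes n}$, I would apply the Young symmetrizer $c_\lambda$ to the space: antisymmetrizing along a column of length $H(\lambda)$ requires $H(\lambda)$ linearly independent vectors in $\mathbb{C}^d$, so $c_\lambda \cdot (\mathbb{C}^d)^{\otimes n} = 0$ whenever $H(\lambda) > d$, and conversely it is nonzero when $H(\lambda) \leq d$ (exhibit a nonvanishing image on a tensor of basis vectors). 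This pins down the summation range as $\{\lambda \vdash n : H(\lambda) \leq d\}$ and completes the proof of Eq.~\eqref{eq:SW}.
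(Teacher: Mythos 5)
The paper does not actually prove this theorem: it is stated as a classical textbook result and justified only by citation to standard references on representation theory, so there is no in-paper argument to compare yours against. Your double-commutant proof is the standard complete argument and is correct as outlined: the identification of $\mathcal{B}'$ with the symmetric tensors in $\mathrm{End}(\mathbb{C}^d)^{\otimes n}$, the polarization identity showing that $n$-th tensor powers span $\mathrm{Sym}^n(\mathrm{End}(\mathbb{C}^d))$ over $\mathbb{C}$, the Zariski-density step to restrict to invertible $X$, Maschke plus the double commutant theorem to get $\mathcal{A}=\mathcal{B}'$ and $\mathcal{B}=\mathcal{A}'$ with both semisimple, and the resulting bimodule decomposition with pairwise inequivalent irreducible multiplicity spaces are all sound. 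The final step identifying the index set is also right; the only point worth spelling out is why $c_\lambda\cdot(\mathbb{C}^d)^{\otimes n}=0$ implies the $\lambda$-isotypic component vanishes, which follows from the fact that the multiplicity of the Specht module $\mathcal{S}_\lambda$ in a module $V$ equals $\dim\mathrm{Hom}_{S_n}(\mathbb{C}[S_n]c_\lambda,V)$ and this space is isomorphic to (a scalar multiple of) $c_\lambda V$. Your proof actually delivers slightly more than the paper's loose statement, namely irreducibility and mutual inequivalence of the $\mathcal{U}_\lambda$, which is what the paper implicitly relies on later when it treats $\mathcal{U}_\lambda$ as the multiplicity space and invokes Schur's lemma.
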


In fact, for our purposes we need the Schur-Weyl duality not for $GL(d)$ but rather its subgroup $SU(d)$, see~\cite{quintino21unitary}. However, it turns out that 
any irreducible representation of $GL(d)$ on $(\mathbb{C}^d)^{\otimes n}$ is an irreducible representation of $SU(d)$ on $(\mathbb{C}^d)^{\otimes n}$ and vice versa. In particular, it means that  that also the linear invariants of $SU(d)$ are
those of $GL(d)$, and they belong to the group algebra of $S_n$ denoted by $\mathbb{C}[S_n]$. This however means that we can write the same statement in Theorem~\ref{SW} for group $SU(d)$~\cite{Goodman}.
From the decomposition given in Theorem~\ref{SW} we deduce that for a given irrep $\lambda$ of $S_n$, the space $\mathcal{U}_{\lambda}$ is the multiplicity space of dimension $m_{\lambda}$ (multiplicity of irrep $\lambda$), while the space $\mathcal{S}_{\lambda}$ is the representation space of dimension $d_{\lambda}$ (dimension of irrep $\lambda$). Every operator commuting with the diagonal action of $U^{\otimes n}$, due to the Schur's lemma must be proportional to identity on  spaces $\mathcal{U}_{\lambda}$ and admits non-triviall parts on spaces $\mathcal{S}_{\lambda}$. Conversely, any operator commuting with action of the permutation group $S_n$ is non-trivially supported on irreducible spaces $\mathcal{U}_{\lambda}$. In the former case, i.e. in  very irreducible space $\mathcal{S}_{\lambda}$ one can construct an orthonormal irreducible basis $\{|\lambda,i\>\}$, where $i=1,\ldots,d_{\lambda}$, for example by exploiting Young-Yamanouchi construction~\cite{Ceccherini}. 
With this vectors we associate an irreducible basis operators $E^{\lambda}_{ij}$, for $i,j=1,\ldots,d_{\lambda}$ admitting the following form:
\begin{align}
\label{eq:basisSWbasis}
E_{ij}^{\lambda}:=\id^{\mathcal{U}_{\lambda}}\otimes |\lambda,i\>\<\lambda,j|_{\mathcal{S}_{\lambda}}.
\end{align}
Sine the basis $\{|\lambda,i\>\}_{i=1}^{d_{\lambda}}$ is orthogonal in both indices the above operators fulfil the following properties, which are
\begin{align}
\label{irreps_basis_prop}
E_{ij}^{\lambda}E_{kl}^{\lambda'}=\delta^{\lambda \lambda'}\delta_{jk}E^{\lambda}_{il},\qquad \tr E_{ij}^{\lambda}=\delta_{ij}m_{\lambda}.
\end{align}
From the operators $E_{ij}^{\lambda}$, by using the relation $\sum_i |\lambda,i\>\<\lambda,i|_{\mathcal{S}_{\lambda}}=\id^{\mathcal{S}}_{\lambda}$, we can construct so-called Young projectors $P^{\lambda}$~\cite{Ceccherini} on the irreducible components $\lambda$:
\begin{align}
\label{Yng_proj}
P^{\lambda}:=&\sum_{i=1}^{d_{\lambda}}E_{ii}^{\lambda}=\id^{\mathcal{U}}_{\lambda}\otimes \id^{\mathcal{S}}_{\lambda},\\ P^{\lambda}P^{\lambda'}=&\delta^{\lambda \lambda'}P^{\lambda},\\ 
\tr P^{\lambda}=&m_{\lambda}d_{\lambda}.
\end{align}

Let us take $\lambda \vdash n$ and $\alpha,\beta \vdash n-1$, such that $\alpha,\beta \in \lambda$. Consider an arbitrary unitary irrep $\phi^{\lambda}(\pi)$ for some $\pi \in S_{n}$. Thus it can be always unitarily transformed (reduced) to block form consisting of elements of the subgroup $S_{n-1}\subset S_n$ as
\begin{align}
\phi^{\lambda}(\pi)=\left(\phi^{\lambda}_{i_{\lambda}j_{\lambda}}(\pi)\right)=\left(\phi^{\alpha \beta}_{i_{\alpha}j_{\beta}}(\pi)\right),
\end{align}
where the indices $i_{\lambda},j_{\lambda}$ are the standard matrix indices running from 1 to $d_{\lambda}$. From the above we see the following connection between the indices, namely for fixed $i_{\lambda},j_{\lambda}$ one has  $i_{\lambda}=(\alpha \in \lambda, i_{\alpha}),j_{\lambda}=(\beta\in\lambda,j_{\beta})$, where the indices $i_{\alpha},j_{\beta}$ run from 1 to $d_{\alpha},d_{\beta}$ respectively. Please see Figure~\ref{Fig4} where the graphical illustration of the above considerations is presented or see~\cite{Stu2020a} for more details. 
 The diagonal blocks of the matrix 
$\phi^{\lambda}(\pi)$, \textit{i.e}
the blocks with $\alpha=\beta$,  are of dimension of the corresponding irrep $\varphi^{\alpha}$ of $S_{n-1}$. It means that the both indices $i_{\alpha},j_{\alpha}$ run from 1 to $d_{\alpha}$. In particular, when one considers an irrep $\phi^{\lambda}(\pi')$ for $\pi'\in S_{n-1}$, we have
\begin{align}
\phi^{\lambda}(\pi')=\left(\delta^{\alpha \beta}\varphi^{\alpha}_{i_{\alpha}j_{\alpha}}(\pi')\right)=\bigoplus_{\alpha \in \lambda}\varphi^{\alpha}(\pi').
\end{align}

A similar procedure (induction) can be also done in the reverse order, i.e. starting from an irrep $\varphi^{\alpha}$ of $S_{n-1}$ we can induce irreps $\phi^{\lambda}$ of a larger group $S_n$.
\begin{figure}[!h]
\begin{center}
	\includegraphics[width=0.49\textwidth]{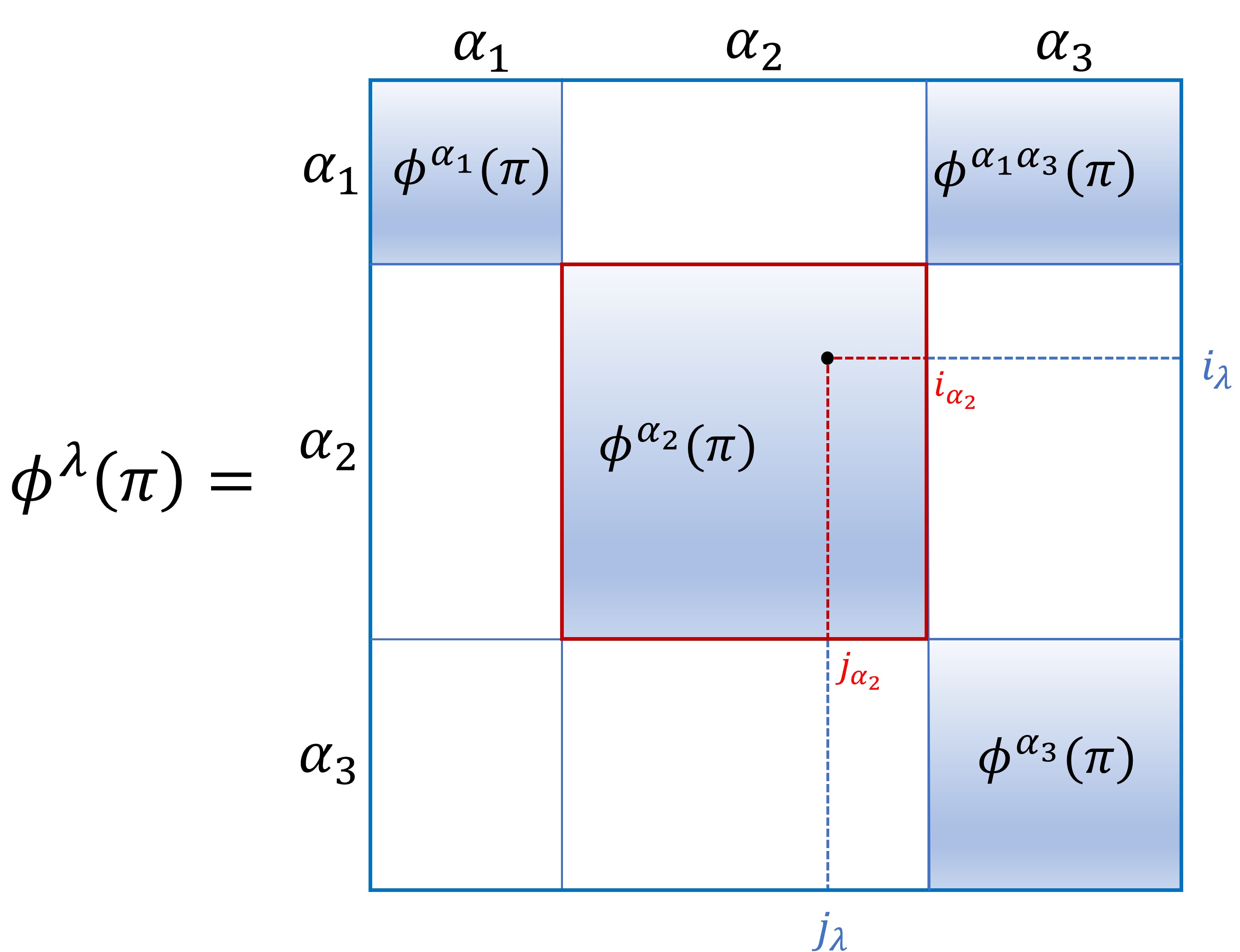}
	\includegraphics[width=0.49\textwidth]{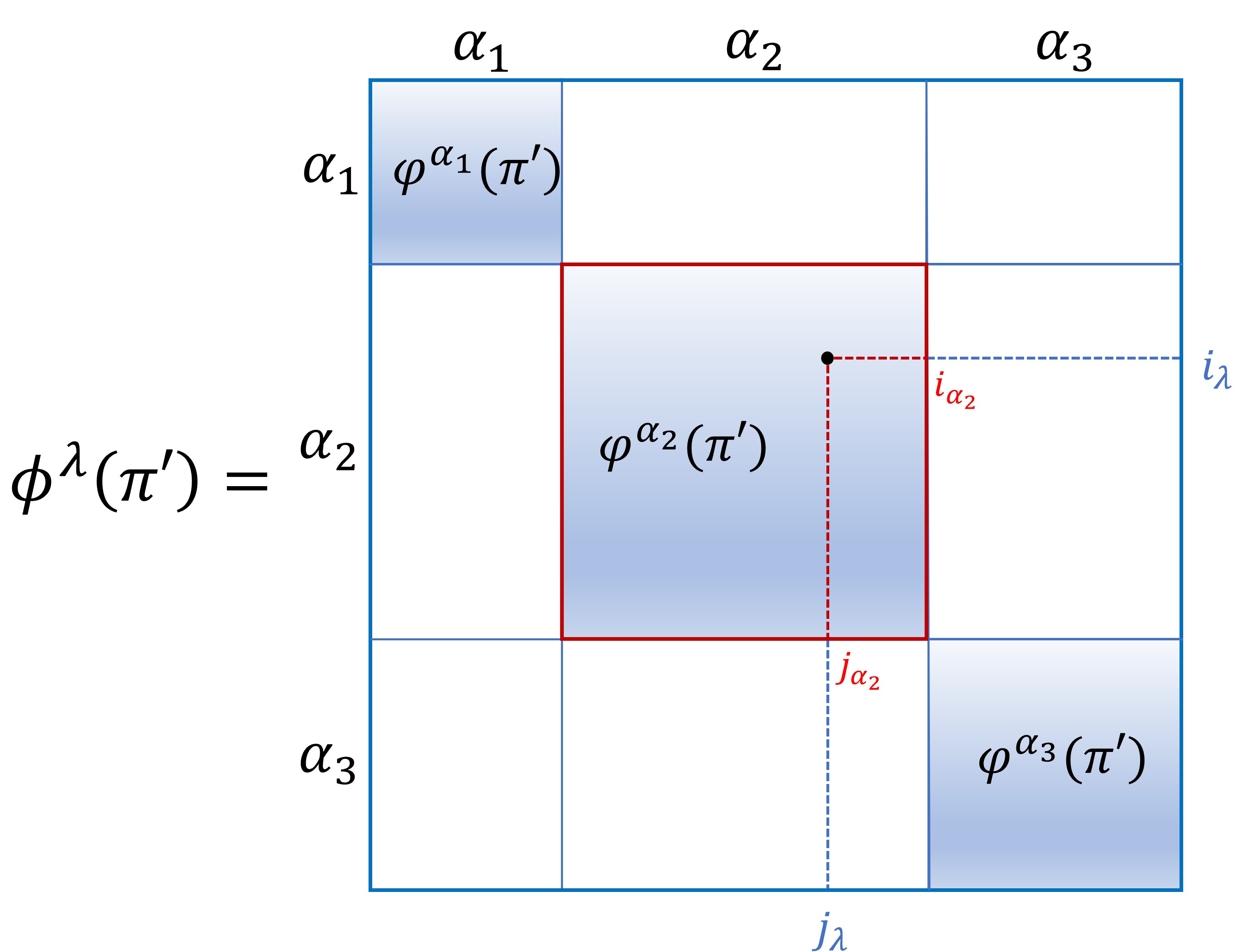}
\end{center}
\caption{The left-hand side graphic represents unitary transformed irrep $\phi^{\lambda}(\pi)=\left(\phi^{\lambda}_{i_{\lambda}j_{\lambda}}(\pi)\right)$ for some $\pi \in S_n$ and $\lambda \vdash n$ into a block form related to subgroup $S_{n-1}$. Each block $\left(\phi^{\alpha_i \alpha_j}(\pi)\right)$ is indexed by a pair of indices from $\{\alpha_1,\alpha_2,\alpha_3\}$ which are obtained from $\lambda$ by removing a single box, so by irreps of $S_{n-1}$. If we take $\pi'\in S_{n-1}$ then $\phi^{\lambda}(\pi')$ takes a block diagonal form, i.e. it can be represented as $\varphi^{\alpha_1}(\pi')\oplus \varphi^{\alpha_2}(\pi')\oplus \varphi^{\alpha_3}(\pi')$ - the right-hand side graphic. Moreover, the chain $\ldots S_{n-1}\subset S_{n} \subset \ldots$ is multiplicity free, which means that every $\alpha \vdash (n-1)$, for $\alpha\in\lambda$, occurs exactly once in an irrep $\lambda \vdash n$. There is also a one-to-one correspondence between matrix indices of $\phi^{\lambda}(\pi)$ and every sub-block $\left(\phi^{\alpha_i \alpha_j}(\pi)\right)$. Indeed, every element (black dot) can be indexed by a pair $(i_{\lambda},j_{\lambda})$, where $1\leq i_{\lambda},j_{\lambda}\leq d_{\lambda}$ or by a pair of indices $(i_{\alpha_2},j_{\alpha_2})$ from a sub-block $\phi^{\alpha_2}(\pi)$, where $1\leq i_{\alpha_2},j_{\alpha_2}\leq d_{\alpha_2}$, and we suppressed the double index. The same reasoning works also for off-diagonal blocks. Since the construction is multiplicity free, we have the following relations between dimensions of the irreps: $d_{\lambda}=d_{\alpha_1}+d_{\alpha_2}+d_{\alpha_3}$.} \label{Fig4}
\end{figure}
Furthermore, utilizing the relabelling $i_{\lambda}=(\alpha \in \lambda, i_{\alpha}),j_{\lambda}=(\beta\in\lambda,j_{\beta})$ introduced above leads us to the following notation for the operators $E_{ij}^\lambda$ from~\eqref{eq:basisSWbasis}:
\begin{notation}
\label{cor:lin_comb_of_E2}
    Every operator $E_{ij}^\lambda$, where $E_{ij}^\lambda$ are basis operators of the algebra $\mathbb{C}[S_{n}]$, can be re-written as
    \begin{align}\label{eqn:collary3}
        E_{i_\lambda j_\lambda}^\lambda :=  E^{\alpha\alpha'}_{i_\alpha j_{\alpha'}}(\lambda),
    \end{align}
    where $\alpha,\alpha'\in\lambda$.
\end{notation}

The above considerations give us a way to represents elements of the algebra $\mathbb{C}[S_{n-1}]$ in terms of elements from the algebra $\mathbb{C}[S_{n}]$. Concretely, one can formulate the following:
\begin{lemma}
\label{cor:lin_comb_of_E}
    Operator $E_{ij}^\alpha \otimes \id$, where $E_{ij}^\alpha$ are basis operators of the algebra $\mathbb{C}[S_{n-1}]$, can be written in terms of basis elements of the algebra $\mathbb{C}[S_{n}]$ as
    \begin{align}\label{eqn:collary2}
        E_{i_\alpha j_\alpha}^\alpha \otimes \id = \sum_{\lambda\ni\alpha} E^{\alpha\alpha}_{i_\alpha j_\alpha}(\lambda).
    \end{align}
    Notice that the resulting irreducible basis operator $E^{\alpha\alpha}_{i_\alpha j_\alpha}(\lambda)$ from $\mathbb{C}[S_n]$ is block-diagonal in $\alpha\in\lambda$.
\end{lemma}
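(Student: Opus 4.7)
The plan is to exploit Schur--Weyl duality together with Young's branching rule for the chain $S_{n-1}\subset S_n$. First I would observe that $E^\alpha_{i_\alpha j_\alpha}\in\mathbb{C}[S_{n-1}]$ commutes with the diagonal $SU(d)$-action on the first $n-1$ tensor factors, so $E^\alpha_{i_\alpha j_\alpha}\otimes\id$ commutes with $U_d^{\otimes n}$ for every $U_d\in SU(d)$. Theorem~\ref{SW} then forces this operator to belong to the image of $\mathbb{C}[S_n]$ acting on $(\mathbb{C}^d)^{\otimes n}$, so it must admit an expansion in the irreducible basis $\{E^\lambda_{i_\lambda j_\lambda}\}$ introduced in~\eqref{eq:basisSWbasis}; the whole task reduces to pinning down its coefficients.

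Next I would invoke the multiplicity-free branching rule, which decomposes $\mathcal{S}_\lambda|_{S_{n-1}}\cong\bigoplus_{\alpha'\in\lambda}\mathcal{S}_{\alpha'}$. This is precisely the refinement of the Schur basis underlying the relabelling $i_\lambda=(\alpha\in\lambda,i_\alpha)$ from Notation~\ref{cor:lin_comb_of_E2} and illustrated in Fig.~\ref{Fig4}: for every $\pi'\in S_{n-1}$, the matrix $\phi^\lambda(\pi')$ becomes block-diagonal with the $\alpha'$-block equal to $\varphi^{\alpha'}(\pi')$, so the same block structure is inherited by every element of $\mathbb{C}[S_{n-1}]$ when embedded into $\mathbb{C}[S_n]$.

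The core step is then to evaluate the restriction of $E^\alpha_{i_\alpha j_\alpha}\otimes\id$ to each Schur sector $\mathcal{U}_\lambda\otimes\mathcal{S}_\lambda$. Because the operator is $SU(d)$-invariant, Schur's lemma makes it act as the identity on the multiplicity factor $\mathcal{U}_\lambda$. On the representation factor $\mathcal{S}_\lambda$, its action on the summand $\mathcal{S}_{\alpha'}$ of the branching decomposition is copied from the action of $E^\alpha_{i_\alpha j_\alpha}$ on the $S_{n-1}$-irrep $\mathcal{S}_{\alpha'}$; by the very definition of $E^\alpha_{i_\alpha j_\alpha}$ in~\eqref{eq:basisSWbasis}, this equals $|\alpha,i_\alpha\rangle\langle\alpha,j_\alpha|$ when $\alpha'=\alpha$ and vanishes when $\alpha'\neq\alpha$. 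Reading off the resulting block through Notation~\ref{cor:lin_comb_of_E2} identifies it with $E^{\alpha\alpha}_{i_\alpha j_\alpha}(\lambda)$, and summing over those $\lambda\vdash n$ with $\lambda\ni\alpha$ yields exactly~\eqref{eqn:collary2}. The advertised block-diagonality in $\alpha\in\lambda$ is then automatic, since no off-diagonal ($\alpha\neq\alpha'$) term can ever appear in this expansion.

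The main obstacle I expect is ensuring the clean separation of the action of $E^\alpha_{i_\alpha j_\alpha}\otimes\id$ on the multiplicity spaces $\mathcal{U}_\lambda$ versus the representation spaces $\mathcal{S}_\lambda$, which requires fixing a Schur basis simultaneously adapted to both $S_n$ and $S_{n-1}$ along the canonical subgroup chain. Once the Young--Yamanouchi labelling compatible with this chain is chosen, the compatibility is automatic and the remainder of the argument is routine bookkeeping with the irreducible basis of $\mathbb{C}[S_n]$.
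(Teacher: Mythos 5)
Your proof is correct, but it follows a genuinely different route from the paper's. The paper treats \eqref{eqn:collary2} as an expansion with unknown coefficients $x_{k_\beta l_{\beta'}}^{\beta\beta'}(\lambda)$ and pins them down by computing Hilbert--Schmidt overlaps $\tr\bigl[(E^\alpha_{i_\alpha j_\alpha}\otimes\id)\,E^{\gamma'\gamma}_{l_{\gamma'}k_\gamma}(\lambda)\bigr]$ against the dual basis; the crucial technical input there is a partial-trace identity $\tr_n\bigl(E^{\gamma'\gamma}_{l_{\gamma'}k_\gamma}(\lambda)\bigr)=\tfrac{m_\lambda}{m_\gamma}\,\delta_{\gamma\gamma'}E^\gamma_{l_\gamma k_\gamma}$ imported from an earlier work (Lemma~7 of~\cite{Stu2020a}), combined with the orthogonality relations \eqref{irreps_basis_prop}. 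You instead argue structurally: Schur--Weyl duality and Schur's lemma force $E^\alpha_{i_\alpha j_\alpha}\otimes\id$ to act as $\id_{\mathcal{U}_\lambda}\otimes M_\lambda$ on each sector, and the multiplicity-free branching rule $\mathcal{S}_\lambda|_{S_{n-1}}\cong\bigoplus_{\alpha'\in\lambda}\mathcal{S}_{\alpha'}$ identifies $M_\lambda$ blockwise as $\delta_{\alpha\alpha'}\ketbra{\alpha,i_\alpha}{\alpha,j_\alpha}$, since $E^\alpha_{i_\alpha j_\alpha}$ is the image of the canonical matrix unit of $\mathbb{C}[S_{n-1}]$ and hence vanishes in every irrep $\alpha'\neq\alpha$. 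Your approach buys a more conceptual, self-contained argument that avoids the external partial-trace lemma and makes the block-diagonality in $\alpha\in\lambda$ manifest rather than emergent from a delta function; the paper's coefficient-extraction approach buys an explicit computation that only uses orthogonality and does not require spelling out the compatibility of the Schur basis with the subgroup chain (which, as you correctly note, is the one point your argument must fix via the Young--Yamanouchi labelling, and which the paper indeed assumes in Sec.~\ref{sec:irreps}). Both proofs implicitly restrict the sum to $\lambda$ with $H(\lambda)\leq d$, so that is not a gap on your side.
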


\begin{proof}
The operator $E_{i_\alpha j_\alpha}^\alpha \otimes \id$ clearly belongs to the algebra $\mathbb{C}[S_n]$, so in general it can be decomposed in terms of operators $E_{i_\lambda j_\lambda}^\lambda :=  E^{\beta\beta'}_{i_\beta j_{\beta'}}(\lambda)$ as
\begin{equation}
\label{eq:L2eq1}
E_{i_\alpha j_\alpha}^\alpha \otimes \id=\sum_{\lambda}\sum_{\beta,\beta'\in \lambda}\sum_{k_{\beta},l_{\beta'}}x_{k_{\beta}l_{\beta'}}^{\beta \beta'}(\lambda)E^{\beta \beta'}_{k_{\beta}l_{\beta'}}(\lambda),
\end{equation}
where the numbers $x_{k_{\beta}l_{\beta'}}^{\beta \beta'}(\lambda)$ are unknown coefficients. To determine the coefficients, let us compute overlap of the left-hand side of~\eqref{eq:L2eq1} with some basis operator $\left(E^{\gamma \gamma'}_{k_{\gamma}l_{\gamma'}}(\lambda)\right)^{\dagger}=E^{\gamma' \gamma}_{l_{\gamma'}k_{\gamma}}(\lambda)$:
\small
\begin{align}
\label{eq:L2eq2}
\tr\left[\left(E_{i_\alpha j_\alpha}^\alpha \otimes \id\right) E^{\gamma' \gamma}_{l_{\gamma'}k_{\gamma}}(\lambda)\right]&=\tr\left[E_{i_\alpha j_\alpha}^\alpha \tr_n \left(E^{\gamma' \gamma}_{l_{\gamma'}k_{\gamma}}(\lambda)\right)\right]\\&=
\frac{m_{\lambda}}{m_{\gamma}}\tr\left(E_{i_\alpha j_\alpha}^\alpha E_{l_{\gamma}k_{\gamma}}^{\gamma}\right)\delta_{\gamma \gamma'} \\
&=\frac{m_{\lambda}m_{\alpha}}{m_{\gamma}}\delta_{\gamma \gamma'}\delta_{\alpha \gamma}\delta_{j_{\alpha} l_{\gamma}}\delta_{i_{\alpha}k_{\gamma}}.
\end{align}
\normalsize
To get the third equality we use Lemma 7 from~\cite{Stu2020a}, to get the fourth equality we directly apply relations from~\eqref{irreps_basis_prop}. Now, let us apply the same procedure for the right-hand side of~\eqref{eq:L2eq1} but for some $\lambda'\neq \lambda$. In the same lines we get the following:
\small
\begin{align}
\label{eq:L2eq3}
&\sum_{\lambda}\sum_{\beta,\beta'\in \lambda}\sum_{k_{\beta},l_{\beta'}}x_{k_{\beta}l_{\beta'}}^{\beta \beta'}(\lambda)\tr\left(E^{\beta \beta'}_{k_{\beta}l_{\beta'}}(\lambda)E^{\gamma' \gamma}_{l_{\gamma'}k_{\gamma}}(\lambda')\right) \\
&=\sum_{\lambda}\sum_{\beta,\beta'\in \lambda}\sum_{k_{\beta},l_{\beta'}}m_{\lambda}x_{k_{\beta}l_{\beta'}}^{\beta \beta'}(\lambda)\delta_{\lambda\lambda'}\delta_{\beta'\gamma'}\delta_{l_{\beta'}l_{\gamma'}}\delta_{\beta\gamma}\delta_{k_{\beta}k_{\gamma}}\\
&=m_{\lambda}x_{k_{\gamma}l_{\gamma'}}^{\gamma \gamma'}(\lambda).
\end{align}
\normalsize
The final expressions in~\eqref{eq:L2eq2} and~\eqref{eq:L2eq3} must be equal giving us the following condition:
\begin{align}
\label{eq:L2eq4}
 x_{k_{\gamma}l_{\gamma'}}^{\gamma \gamma'}(\lambda)=\frac{m_{\alpha}}{m_{\gamma}}\delta_{\gamma \gamma'}\delta_{\alpha \gamma}\delta_{j_{\alpha} l_{\gamma'}}\delta_{i_{\alpha}k_{\gamma}}.
\end{align}
Inserting the right hand side of~\eqref{eq:L2eq4} into decomposition~\eqref{eq:L2eq1} we have:
\begin{align}
E_{i_\alpha j_\alpha}^\alpha \otimes \id&=\sum_{\lambda}\sum_{\gamma,\gamma'\in \lambda}\sum_{k_{\gamma},l_{\gamma'}}\frac{m_{\alpha}}{m_{\gamma}}\delta_{\gamma \gamma'}\delta_{\alpha \gamma}\delta_{j_{\alpha} l_{\gamma'}}\delta_{i_{\alpha}k_{\gamma}}E^{\gamma \gamma'}_{k_{\gamma}l_{\gamma'}}(\lambda)\\
&=\sum_{\lambda\ni\alpha} E^{\alpha\alpha}_{i_\alpha j_\alpha}(\lambda).
\end{align}
The final sum runs only over $\lambda$ which can be obtained from $\alpha$ by adding a single box. This finishes the proof.
\end{proof}

\subsubsection{Direct derivations of the upper bound for the unitary conjugation problem}
Due to the presentation of the performance operator $\Omega$ in~\eqref{eq:OMEGAconj}, we see that it commutes with the diagonal action of the unitary group. This means that its nontrivial part, due to the Schur-Weyl duality in Theorem~\ref{SW}, is supported on the representation space $\mathcal{S}_{\lambda}$ connected to the symmetric group. Therefore, the operators from~\eqref{eq:Omega_basis} can be chosen to be operators $E_{ij}^{\lambda}$ defined in~\eqref{eq:basisSWbasis}. Then the operator $\Omega$ from~\eqref{eq:Omega_basis} has the following presentation:
\begin{align}
\label{Omega_27}
\Omega=\frac{1}{d^2}\sum_{\substack{\lambda\vdash (k+1)\\H(\lambda)\leq d}}\sum_{p,q=1}^{d_{\lambda}}\frac{ (E_{pq}^{\lambda})_{\bm{I}P}\otimes (E_{pq}^{\lambda})_{\bm{O}F}}{m_{\lambda}}.
\end{align}
Since the operator $\Omega$ is non-trivial only on the space $\mathcal{S}_{\lambda}$, we assume the same property for the operator $\hat{S}$ from~\eqref{prob:dua2} given in formulation of the dual problem described in section~\ref{subsec:performance_operator}. For the time being, except additional symmetries imposed on the operator $\hat{S}$ we see that it can have an arbitrary form. However, the considered operator $\hat{S}$ must also satisfy constraints from~\eqref{prob:dua2}-\eqref{prob:dual5}, which obviously limits the compatible choices for $\hat{S}$. Since our goal is to find a tight upper bound which matches with the solution of the primal problem from section~\ref{sec:primal_solution}, it is sufficient to present a feasible solution satisfying aforementioned conditions. To this purpose, we choose the operator $\hat{S}$ to be of the following form: 
\begin{align}\label{eqn:ansatzW}
    \hat{S}=  \sum_{\substack{\alpha \vdash k\\H(\alpha)\leq d}} \sum_{k,l=1}^{d_{\alpha}}\frac{d}{m_\alpha} ({E_{kl}^\alpha})_{\bm{I}}\otimes \tau_P \otimes ({E_{kl}^\alpha})_{\bm{O}}  \otimes \tau_F.
\end{align}
In the above, the operators $\tau_P=\tau_F=\frac{\id}{d}$ are maximally mixed states on the respective systems.  It is easy to see that the operator $\hat{S}$ satisfies the constraints~\eqref{prob:dual3}-\eqref{prob:dual5} by construction. Indeed, we have the following:
\begin{align}
W_{P\bm{I}\bm{O}}=\sum_{\substack{\alpha \vdash k\\H(\alpha)\leq d}} \sum_{k,l=1}^{d_{\alpha}}\frac{1}{m_{\alpha}}({E_{kl}^\alpha})_{\bm{I}}\otimes \tau_P \otimes ({E_{kl}^\alpha})_{\bm{O}}.
\end{align}
Now, exploiting properties of irreducible operator basis $\{E_{ij}^{\alpha}\}$ from expression~\eqref{irreps_basis_prop}, we write:
\begin{align}
\tr_{O}(W_{P\bm{I}\bm{O}})&=\sum_{\substack{\alpha \vdash k\\H(\alpha)\leq d}} \sum_{k,l=1}^{d_{\alpha}}\frac{\tr[({E_{kl}^\alpha})_{\bm{O}}]}{m_{\alpha}}({E_{kl}^\alpha})_{\bm{I}}\otimes \tau_P\\
&=\sum_{\substack{\alpha \vdash k\\H(\alpha)\leq d}} \sum_{k=1}^{d_{\alpha}} (E^{\alpha}_{kk})_{\bm{I}}\otimes \tau_P\\
&=\sum_{\substack{\alpha \vdash k\\H(\alpha)\leq d}} (P^{\alpha})_{\bm{I}}\otimes \tau_P\\
&=\id_{\bm{I}}\otimes \tau_P,
\end{align}
and finally $\tr(\tau_P)=1$. Having the above, we are in position to evaluate the minimal value of the constant $c$ satisfying constraint~\eqref{prob:dua2}.
\begin{lemma}
\label{dual:lemma1}
The minimal value of the constant $c$ satisfying constraint~\eqref{prob:dua2} for the operator $\hat{S}$ from~\eqref{eqn:ansatzW} is given by,
\begin{align}
\label{l:max_c_dual_homo}
    c= \frac{1}{d}\max_{\lambda\vdash (k+1)}\frac{\sum_{\beta\in\lambda}m_\beta}{m_\lambda},
\end{align}
where $m_{\lambda}, m_{\beta}$ are multiplicities of  irreps $\lambda\vdash k+1, \beta \vdash k$ in the Schur-Weyl duality. The symbol $\beta \in \lambda$ denotes all Young diagrams obtained from a Young diagram $\lambda$ by removing a single box.
 \end{lemma}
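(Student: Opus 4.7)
The plan is to reduce the positive semidefinite condition $\Omega\le c\hat{S}$ of~\eqref{prob:dua2} to a block-by-block problem via the Schur-Weyl decomposition of the $(k+1)$-fold tensor products on $\H_{\bm{I}}\otimes\H_P$ and $\H_{\bm{O}}\otimes\H_F$, and then solve each block optimisation explicitly. First I would re-express $\hat{S}$ from~\eqref{eqn:ansatzW} in the irreducible basis of $\mathbb{C}[S_{k+1}]$: since $\tau_P=\id_P/d$, the factor $(E^\alpha_{ij})_{\bm{I}}\otimes\tau_P$ equals $\frac{1}{d}\sum_{\lambda\ni\alpha}E^{\alpha\alpha}_{ij}(\lambda)_{IP}$ by Lemma~\ref{cor:lin_comb_of_E}, so substitution yields
\begin{equation*}
\hat{S}=\frac{1}{d}\sum_{\lambda,\lambda'\vdash(k+1)}\sum_{\alpha\in\lambda\cap\lambda'}\frac{1}{m_\alpha}\sum_{ij}\bigl(E^{\alpha\alpha}_{ij}(\lambda)\bigr)_{IP}\otimes \bigl(E^{\alpha\alpha}_{ij}(\lambda')\bigr)_{OF}.
\end{equation*}
Each summand is supported in a single sector $\Pi_{\lambda\lambda'}:=P^\lambda_{IP}\otimes P^{\lambda'}_{OF}$, the sectors are mutually orthogonal, and $\hat{S}\ge 0$ because $\sum_{ij}(E^\alpha_{ij})_{\bm{I}}\otimes(E^\alpha_{ij})_{\bm{O}}$ is the (unnormalised) projector onto a maximally entangled vector in $\mathcal{S}_\alpha^{\bm{I}}\otimes\mathcal{S}_\alpha^{\bm{O}}$ tensored with identities on the multiplicity spaces.

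Next, I note that $\Omega$ in~\eqref{Omega_27} is supported only in the diagonal sectors $\Pi_{\lambda\lambda}$, while $\hat{S}$ is block-diagonal across all $\Pi_{\lambda\lambda'}$. The global PSD constraint therefore decouples: for $\lambda\ne\lambda'$ the condition $c\hat{S}_{\lambda\lambda'}\ge 0$ holds automatically from $\hat{S}\ge 0$ and $c\ge 0$, while for $\lambda=\lambda'$ it becomes the nontrivial requirement $c\,\hat{S}_{\lambda\lambda}\ge \Omega_{\lambda\lambda}$. The minimum admissible $c$ is therefore $\max_{\lambda\vdash(k+1)}c_\lambda$, with $c_\lambda$ the smallest constant solving the corresponding diagonal sector inequality.

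For the per-sector analysis I would factor out the trivial action on the multiplicity spaces $\mathcal{U}_\lambda^{IP}\otimes\mathcal{U}_\lambda^{OF}$ and work on $\mathcal{S}_\lambda^{IP}\otimes\mathcal{S}_\lambda^{OF}$. Using the refined basis of $\mathcal{S}_\lambda$ labelled as in Notation~\ref{cor:lin_comb_of_E2}, I would define $\ket{\Phi^+_\alpha}:=\sum_{k=1}^{d_\alpha}\ket{\lambda;\alpha,k}_{IP}\otimes\ket{\lambda;\alpha,k}_{OF}$ for each $\alpha\in\lambda$. The reduced operators then read $\hat{S}_{\lambda\lambda}=\frac{1}{d}\sum_{\alpha\in\lambda}\frac{1}{m_\alpha}\ketbra{\Phi^+_\alpha}{\Phi^+_\alpha}$ and $\Omega_{\lambda\lambda}=\frac{1}{d^2 m_\lambda}\ketbra{\Phi^+_\lambda}{\Phi^+_\lambda}$, where $\ket{\Phi^+_\lambda}=\sum_{\alpha\in\lambda}\ket{\Phi^+_\alpha}$ and the vectors $\ket{\Phi^+_\alpha}$ are mutually orthogonal with squared norm $d_\alpha$. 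Since $\Omega_{\lambda\lambda}$ is rank one and $\ket{\Phi^+_\lambda}$ lies in the support of $\hat{S}_{\lambda\lambda}$, the smallest admissible constant is $c_\lambda=\bra{v}\hat{S}_{\lambda\lambda}^+\ket{v}$ with $\ket{v}=\ket{\Phi^+_\lambda}/\sqrt{d^2 m_\lambda}$; the eigendecomposition $\hat{S}_{\lambda\lambda}^+=\sum_{\alpha\in\lambda}\frac{d\,m_\alpha}{d_\alpha^2}\ketbra{\Phi^+_\alpha}{\Phi^+_\alpha}$ and the overlaps $\langle\Phi^+_\lambda|\Phi^+_\alpha\rangle=d_\alpha$ then give $c_\lambda=\frac{\sum_{\beta\in\lambda}m_\beta}{d\,m_\lambda}$, and maximising over $\lambda\vdash(k+1)$ yields the formula in the statement.

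The main obstacle is establishing the sector-block-diagonal structure of $\hat{S}$ in the first step, since without it the off-diagonal blocks would couple the different $\lambda$-sectors and forbid the block-wise optimisation. Once Lemma~\ref{cor:lin_comb_of_E} is invoked to place $\hat{S}$ cleanly in the $\mathbb{C}[S_{k+1}]\otimes \mathbb{C}[S_{k+1}]$ basis, the remaining steps reduce to finite-dimensional linear algebra on each $\mathcal{S}_\lambda^{IP}\otimes \mathcal{S}_\lambda^{OF}$.
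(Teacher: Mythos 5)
Your proposal is correct and follows essentially the same route as the paper: you invoke Lemma~\ref{cor:lin_comb_of_E} to lift the $S_k$ basis operators into $\mathbb{C}[S_{k+1}]$, observe that the constraint decouples into the diagonal $\lambda$-sectors because $\Omega$ has no support on $\Pi_{\lambda\lambda'}$ for $\lambda\neq\lambda'$, and then solve a rank-one versus diagonal PSD comparison on each $\mathcal{S}_\lambda\otimes\mathcal{S}_\lambda$ block. The only cosmetic difference is the last step, where you apply the pseudoinverse criterion $c_\lambda=\langle w|\hat{S}_{\lambda\lambda}^{+}|w\rangle$ directly while the paper derives the same bound by sandwiching the inequality with $A^{-1/2}$ and reducing to $\frac{1}{dm_\lambda}\ketbra{\psi}{\psi}\leq c\,\id$.
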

 
 \begin{proof}
 
 Writing the constraint from~\eqref{prob:dua2} explicitly by using expressions~\eqref{Omega_27}  and~\eqref{eqn:ansatzW} we get the following inequality
 \begin{align}
 \label{ineq}
   &\frac{1}{d^2}\sum_{\substack{\lambda\vdash (k+1)\\ H(\lambda)\leq d}}\sum_{p,q=1}^{d_{\lambda}}\frac{ (E_{pq}^{\lambda})_{\bm{I}P}\otimes (E_{pq}^{\lambda})_{\bm{O}F}}{m_{\lambda}}\\ & \hspace{5mm} \leq c \cdot  
   \sum_{\substack{\alpha \vdash (k)\\H(\alpha)\leq d}} \sum_{k,l=1}^{d_{\alpha}}\frac{d}{m_\alpha} ({E_{kl}^\alpha})_{\bm{I}} \otimes \tau_P\otimes ({E_{kl}^\alpha})_{\bm{O}}  \otimes \tau_F. \nonumber
\end{align}
To keep the notation simple, we drop indices the $\bm{I},\bm{O},P,F$, since it will be clear from the context which systems are referred to. First, we notice that the right-hand side of~\eqref{ineq} can be re-written by applying Lemma ~\ref{cor:lin_comb_of_E} with $\tau=\id/d$, leading to
\begin{equation}
\label{eq:aux1}
\begin{split}
&\sum_{\alpha} \sum_{k,l=1}^{d_{\alpha}}\frac{d}{m_\alpha} {E_{kl}^\alpha}\otimes \tau\otimes {E_{kl}^\alpha} \otimes \tau \\
&=\sum_{\alpha} \sum_{k,l=1}^{d_{\alpha}}\frac{1}{dm_\alpha} {E_{kl}^\alpha} \otimes \id\otimes {E_{kl}^\alpha}  \otimes \id\\
&=\sum_{\alpha} \sum_{k_\alpha,l_\alpha=1}^{d_{\alpha}} \sum_{\lambda,\lambda' \ni \alpha}\frac{1}{dm_\alpha}E^{\alpha\alpha}_{k_\alpha l_\alpha}(\lambda)\otimes E^{\alpha\alpha}_{k_\alpha l_\alpha}(\lambda')\\
&=\sum_{\lambda,\lambda'}\sum_{\alpha \in \lambda \wedge \lambda'}\sum_{k_\alpha,l_\alpha=1}^{d_{\alpha}}\frac{1}{dm_\alpha}E^{\alpha\alpha}_{k_\alpha l_\alpha}(\lambda)\otimes E^{\alpha\alpha}_{k_\alpha l_\alpha}(\lambda') .
\end{split}
\end{equation}
Here, we introduced symbol $\alpha \in \lambda \wedge \lambda'$, indicating that simultaneously $\alpha \in \lambda$ and $\alpha \in \lambda'$.
Next, by Notation~\ref{cor:lin_comb_of_E2} every irreducible operator basis labelled by $\lambda\vdash k+1$ on the left-hand side of equation~\eqref{ineq}  can be written as
\begin{align}
\label{DtoM}
E_{pq}^{\lambda}=E^{\beta \beta'}_{p_{\beta}q_{\beta'}}(\lambda).
\end{align}
The above expression and final result of~\eqref{eq:aux1} allows us to re-write inequality~\eqref{ineq} in the following form:
\begin{align}
\label{ineqA}
&\frac{1}{d^2}\sum_{\lambda}\sum_{\beta,\beta'\in\lambda}\sum_{p_{\beta},q_{\beta'}}\frac{E^{\beta\beta'}_{p_{\beta}q_{\beta'}}(\lambda)\otimes E^{\beta\beta'}_{p_{\beta}q_{\beta'}}(\lambda)}{m_{\lambda}} \\
&\hspace{5mm}\leq c\cdot \sum_{\lambda,\lambda'}\sum_{\alpha \in \lambda \wedge \lambda'}\sum_{k_\alpha,l_\alpha=1}^{d_{\alpha}}\frac{1}{dm_\alpha}E^{\alpha\alpha}_{k_\alpha l_\alpha}(\lambda)\otimes E^{\alpha\alpha}_{k_\alpha l_\alpha}(\lambda'). \nonumber
\end{align}
Now, notice that every $\lambda,\lambda' \vdash (k+1)$ label a different nonequivalent irrep. This means that we deal with different orthogonal blocks. From the above inequality it is clear that non-trivial solutions for the constant $c$ can be obtained only for $\lambda=\lambda'$ -- for $\lambda \neq \lambda'$ we do not have cross terms on the left-hand side of~\eqref{ineqA}. Consequently, it is sufficient to restrict ourselves to comparing terms for every $\lambda$ separately:
\begin{align}
\label{ineq2}
&\frac{1}{d^2}\sum_{\beta,\beta'\in\lambda}\sum_{p_{\beta},q_{\beta'}}\frac{E^{\beta\beta'}_{p_{\beta}q_{\beta'}}(\lambda)\otimes E^{\beta\beta'}_{p_{\beta}q_{\beta'}}(\lambda)}{m_{\lambda}} \\
&\leq c\cdot \sum_{\alpha\in\lambda}\sum_{i_{\alpha},j_{\alpha}}\frac{1}{dm_{\alpha}}E^{\alpha\alpha}_{i_\alpha j_\alpha}(\lambda) \otimes E^{\alpha\alpha}_{i_\alpha j_\alpha}(\lambda). \nonumber
\end{align}
All the operators in the above expression are the irreducible basis operators for the symmetric groups $S_{k+1}$ and $S_k$. Hence, by Schur-Weyl duality, they act non-trivially only on the symmetric component of the Schur-Weyl decomposition~\eqref{eq:SW}. 
Now, using the explicit forms of the irreducible basis operators in the Schur basis from~\eqref{eq:basisSWbasis} we rewrite~\eqref{ineq2} in the following way (ignoring the multiplicity space):
\begin{align}
&\frac{1}{d^2m_{\lambda}}\sum_{\beta,\beta'\in\lambda}\sum_{p_{\beta},q_{\beta'}}|\beta,p_{\beta}\>\<\beta',q_{\beta'}|\otimes |\beta,p_{\beta}\>\<\beta',q_{\beta'}| \\
&\leq \sum_{\alpha \in \lambda}\sum_{i_{\alpha},j_{\alpha}} \frac{c}{dm_{\alpha}}|\alpha,i_{\alpha}\>\<\alpha,j_{\alpha}|\otimes |\alpha,i_{\alpha}\>\<\alpha,j_{\alpha}|. \nonumber
\end{align}
Using the definition $|\Phi^+_{\alpha}\>=\sum_{i_{\alpha}=1}^{d_{\alpha}}|i_{\alpha}\>|i_{\alpha}\>$ for an unnormalised maximally entangled state on irrep $\alpha$ and rearranging the irreps, we get
\begin{align}
&\frac{1}{d^2m_{\lambda}}\sum_{\beta, \beta'\in \lambda}|\beta\beta\>\<\beta'\beta'|\otimes|\Phi^+_{\beta}\>\<\Phi^+_{\beta'}| \\
&\leq\sum_{\alpha \in \lambda} \frac{c}{dm_{\alpha}}|\alpha\alpha\>\<\alpha\alpha|\otimes |\Phi^+_{\alpha}\>\<\Phi^+_{\alpha}|. \nonumber
\end{align}
 Now, we notice that by applying an isometry $|\beta\beta\>\mapsto |\beta\>$ we can rewrite the above inequality as:
\begin{align}
&\frac{1}{d^2m_{\lambda}}\sum_{\beta, \beta'\in \lambda}|\beta\>\<\beta'|\otimes|\Phi^+_{\beta}\>\<\Phi^+_{\beta'}| \\
&\leq \sum_{\beta \in \lambda} \frac{c}{dm_{\beta}}|\beta\>\<\beta|\otimes |\Phi^+_{\beta}\>\<\Phi^+_{\beta}|.
\end{align}
 Defining a new set of normalised vectors $|\psi_{\beta}\>:=\frac{|\widetilde{\psi}_{\beta}\>}{||\widetilde{\psi}_{\beta}||}=\frac{1}{\sqrt{d_{\beta}}}|\widetilde{\psi}_{\beta}\>$, where $|\widetilde{\psi}_{\beta}\>:=|\beta\>\otimes |\Phi^+_{\beta}\>$, we rewrite the above equation as
\begin{align}
\label{ineq3}
 \frac{1}{d^2m_{\lambda}}\sum_{\beta,\beta'}\sqrt{d_{\beta}d_{\beta'}}|\psi_{\beta}\>\<\psi_{\beta'}|\leq \frac{c}{d}\sum_{\beta\in\lambda}\frac{d_{\beta}}{m_{\beta}}|\psi_{\beta}\>\<\psi_{\beta}|.
\end{align}
Notice that the matrix $A:=\sum_{\beta\in\lambda}\frac{d_{\beta}}{m_{\beta}}|\psi_{\beta}\>\<\psi_{\beta}|$ is diagonal and of the full rank. These properties allow us to compute easily $A^{-1/2}$, which is $A^{-1/2}=\sum_{\beta\in\lambda}\sqrt{\frac{m_{\beta}}{d_{\beta}}}|\psi_{\beta}\>\<\psi_{\beta}|$, since blocks labelled by different $\beta$ are orthogonal. Sandwiching inequality~\eqref{ineq3} by $A^{-1/2}$ and using fact that $\<\psi_{\beta}|\psi_{\beta'}\>=\delta_{\beta\beta'}$, gives the following after a few simple steps
\begin{align}
 &\frac{1}{dm_{\lambda}}\sum_{\beta,\beta'\in\lambda}\sqrt{m_{\beta}m_{\beta'}}|\psi_{\beta}\>\<\psi_{\beta'}|\leq c\cdot\id,\\
 &\frac{1}{dm_{\lambda}}|\psi\>\<\psi|\leq c\cdot \id \label{ineq4},
\end{align}
where we defined $|\psi\>:=\sum_{\beta\in\lambda}\sqrt{m_{\beta}}|\psi_{\beta}\>$, with $||\psi||^2=\sum_{\beta\in\lambda}m_{\beta}$. Expression~\eqref{ineq4} implies that 
\begin{align}
    &\forall \ \lambda \vdash k+1 \quad \frac{1}{d}\frac{\sum_{\beta\in\lambda}m_\beta}{m_\lambda}\leq c.
\end{align}
To satisfy the above inequality, one has to choose the constant $c$ at least equal to the maximal possible value of the ratio $(1/d)\sum_{\beta\in\lambda}m_\beta/m_{\lambda}$ calculated for all possible $\lambda'$s. This finally leads to
\begin{align}
\label{max_c_dual_homo}
    c = \frac{1}{d}\max_{\lambda\vdash (k+1)}\frac{\sum_{\beta\in\lambda}m_\beta}{m_\lambda}.
\end{align}
\end{proof}
Lemma~\ref{dual:lemma1} connects the minimal value of the constant $c$ from~\eqref{prob:dua2} with the specific choice of the operator $\hat{S}$ in~\eqref{eqn:ansatzW} and group theoretical parameters describing irreps of the symmetric groups $S_k$ and $S_{k+1}$. However, the final expression~\eqref{l:max_c_dual_homo} still involves a maximisation problem. 
To solve this maximisation explicitly, let us define an auxiliary function of Young diagrams of $n$ boxes:
\begin{equation}
\label{e:clambda}
    c(\lambda):=\frac{\sum_{\beta\in\lambda}m_\beta}{m_\lambda}.
\end{equation}
For the above defined function we can formulate the following:
\begin{proposition}
For a Young diagram $\lambda \vdash n$ with its step representation $(k_1,p_1,k_2,p_2,\ldots,k_s,p_s)$,  the function from~\eqref{e:clambda} is given by
\small
\begin{align}
\label{e:clambdap}
    &c(\lambda)=\sum_{j=1}^s\frac{k_j p_j}{d-\sum_{i=1}^jk_i+\sum_{i=j}^sp_i} \prod_{i=1}^{j-1} \left(1+\frac{k_i}{d(c_i,c_j)}\right) \\ 
    &\ \hspace{9mm} \prod_{i=j+1}^s \left(1+\frac{p_i}{d(c_j,c_i)}\right), \nonumber
\end{align}
\normalsize
where we assume that the product is equal to $1$ when the lower range of the products is larger than the upper one.
\end{proposition}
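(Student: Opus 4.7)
The plan is to compute the sum $c(\lambda)=\sum_{j=1}^{s}m_{\beta_j}/m_{\lambda}$, where $\beta_j$ is the diagram obtained from $\lambda$ by removing its $j$-th corner box $c_j$ (these are exactly the removable boxes of $\lambda$), and to show that each summand matches the corresponding term in \eqref{e:clambdap}. Since $m_{\lambda}$ is the dimension of the irreducible $GL(d)$-representation indexed by $\lambda$, I would use Weyl's dimension formula $m_{\lambda}=\prod_{1\leq i<k\leq d}(\lambda_i-\lambda_k+k-i)/(k-i)$, with $\lambda$ padded by zeros to length $d$. Removing $c_j$ decreases only $\lambda_{r_j}$ by one, where $r_j:=k_1+\cdots+k_j$; after cancelling the pairs of factors in the Weyl formulas for $\lambda$ and $\beta_j$ that do not involve row $r_j$, one is left with
\[
\frac{m_{\beta_j}}{m_{\lambda}}=\prod_{i<r_j}\frac{\lambda_i-\lambda_{r_j}+r_j-i+1}{\lambda_i-\lambda_{r_j}+r_j-i}\,\prod_{k>r_j}\frac{\lambda_{r_j}-\lambda_k+k-r_j-1}{\lambda_{r_j}-\lambda_k+k-r_j}.
\]

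Next I would partition the index sets $\{i:i<r_j\}$ and $\{k:k>r_j\}$ according to the blocks of the step representation, augmented by a ``phantom'' block consisting of the trailing zero rows from $\ell(\lambda)+1$ to $d$. Within any such block the row length is constant, so only the row index varies and the product over the block becomes a telescoping product of the form $\prod_{t=a}^{b}\tfrac{t+1}{t}=\tfrac{b+1}{a}$ or $\prod_{t=a}^{b}\tfrac{t-1}{t}=\tfrac{a-1}{b}$. A direct computation then identifies the endpoints with the axial distances of \eqref{eq:ax_dist}: on the $i<r_j$ side, a full block $m<j$ contributes $1+k_m/d(c_m,c_j)$, while the partial block $j$ (the $k_j-1$ rows with $r_{j-1}<i<r_j$, for which $\lambda_i=\lambda_{r_j}$) contributes the clean factor $k_j$. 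On the $k>r_j$ side, a block $m$ with $j<m\leq s$ contributes $\frac{d(c_j,c_{m-1})+p_{m-1}}{d(c_j,c_m)}$ (with the boundary convention $d(c_j,c_j)=0$ and the understanding that the $m=j{+}1$ factor is $\frac{p_j}{d(c_j,c_{j+1})}$), and the phantom block contributes $\frac{d(c_j,c_s)+p_s}{d-\sum_{i=1}^{j}k_i+\sum_{i=j}^{s}p_i}$.

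The main obstacle is recognising that the chain of ratios coming from the blocks $m>j$ together with the phantom factor itself telescopes across adjacent $m$: the denominator $d(c_j,c_m)$ of the block-$m$ factor matches, after adding $p_m$, part of the numerator of the block-$(m{+}1)$ factor. I would verify this by expanding the product and collecting terms (or, equivalently, by induction on $s-j$), which reorganises it into $\frac{p_j}{d-\sum_{i=1}^{j}k_i+\sum_{i=j}^{s}p_i}\prod_{i=j+1}^{s}(1+p_i/d(c_j,c_i))$. Combining with the $i<r_j$ contribution gives
\[
\frac{m_{\beta_j}}{m_{\lambda}}=\frac{k_jp_j}{d-\sum_{i=1}^{j}k_i+\sum_{i=j}^{s}p_i}\prod_{i=1}^{j-1}\Big(1+\frac{k_i}{d(c_i,c_j)}\Big)\prod_{i=j+1}^{s}\Big(1+\frac{p_i}{d(c_j,c_i)}\Big),
\]
and summing over $j=1,\ldots,s$ yields \eqref{e:clambdap}, with the convention that an empty product equals $1$ taking care of the boundary cases $j=1$ and $j=s$.
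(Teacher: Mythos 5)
Your proof is correct, and it reaches \eqref{e:clambdap} by a genuinely different computation than the paper's. The paper starts from the hook--content formula $m_\lambda=\prod_{b\in\lambda}\frac{d-i_b+j_b}{h(b)}$: in the ratio $m_{\beta_\mu}/m_\lambda$ all content factors cancel except that of the removed corner box, so the $d$-dependence is isolated at once in the prefactor $\frac{1}{d-i_{c_\mu}+j_{c_\mu}}$, and what remains is a product of hook ratios $\frac{h(b)}{h(b)-1}$ over the boxes in the same row and column as $c_\mu$; partitioning those boxes along the step representation and telescoping within each sub-block yields the two products directly, with no further rearrangement. You instead start from Weyl's dimension formula over pairs of rows (padded by zeros to length $d$), so only the pairs involving the row $r_j$ survive the cancellation; the block-by-block telescoping is then structurally parallel to the paper's, but the $d$-dependence enters only through your ``phantom'' block of trailing zero rows, which forces the extra cross-block telescoping step you identify as the main obstacle (the chain $\prod_{m>j}\frac{d(c_j,c_{m-1})+p_{m-1}}{d(c_j,c_m)}$ times the phantom factor reorganising into $\frac{p_j}{d-\sum_{i\leq j}k_i+\sum_{i\geq j}p_i}\prod_{i>j}(1+p_i/d(c_j,c_i))$); I checked this reorganisation and it is right. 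The trade-off is that the paper's route localises the dimension parameter $d$ in a single factor from the outset and keeps each telescoping self-contained, while yours avoids any mention of hooks and works entirely with row lengths, at the cost of one additional telescoping pass. Both are complete proofs of the Proposition.
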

\begin{proof}
Let $(k_1,p_1,\ldots,k_s,p_s)$ be a step representation of $\lambda \vdash n$.
Note that removing a one box from the Young diagram $\lambda$ produces a new Young diagram if and only if the box removed is a corner box. Diagrams $\beta\in\lambda$ can therefore be labelled by $\mu=1,2,\ldots,s$. We use the convention that $\beta_\mu$ denotes the diagram obtained by removing the box $c_\mu$. According to formula \eqref{e:clambda}
\begin{equation}
\label{e:cl}
    c(\lambda)=\sum_{\mu=1}^s\frac{m_{\beta_\mu}}{m_\lambda}=
    \sum_{\mu=1}^s
    \frac{\prod_{b\in\beta_\mu}\dfrac{d-i_b+j_b}{h_\mu(b)}}{\prod_{b\in\lambda}\dfrac{d-i_b+j_b}{h(b)}},
\end{equation}
where $i_b,j_b$ denote the coordinates of a box $b$, $h_\mu(b)$ and $h(b)$ is the length of the hook of $b$ in $\beta_\mu$ and $\lambda$ respectively.  Here we used a relation between multiplicity of a given irrep with the number of semisimple Young diagrams~\cite{FultonSchur}.

For $\mu=1,2,\ldots,s$ we define the following subsets $R_\mu$, $C_\mu$ of $\lambda$ 
\begin{eqnarray}
    &R_\mu:=\{b\in\lambda:\,i_b=i_{c_\mu},\;j_b<j_{c_\mu}\},\\
    &C_\mu:=\{b\in\lambda:\,i_b<i_{c_\mu},\;j_b=j_{c_\mu}\}.
\end{eqnarray}
These subsets contain respectively boxes which are in the same row/column as the corner box $c_{\mu}$ not taking into account $c_\mu$ itself.
Let $G_\mu:=R_\mu\sqcup C_\mu\sqcup\{c_\mu\}$. Observe that $h_\mu(b)=h(b)$ if $b\not\in G_\mu$. Thus, \eqref{e:cl} simplifies to 
\begin{align}
    c(\lambda)=&
    \sum_{\mu=1}^s
    \frac{
    \prod_{b\in G_\mu\setminus\{c_\mu\}}\dfrac{d-i_b+j_b}{h_\mu(b)}
    }
    {
    \prod_{b\in G_\mu}\dfrac{d-i_b+j_b}{h(b)}
    } \\    &
    =
    \sum_{\mu=1}^s
    \frac{1}{d-i_{c_\mu}+j_{c_\mu}}
    \prod_{b\in C_\mu}
    \frac{h(b)}{h(b)-1}
    \prod_{b\in R_\mu}
    \frac{h(b)}{h(b)-1},
    \label{e:cle}
\end{align}
where we used the fact that $h(c_\mu)=1$, and $h_\mu(b)= h(b)-1$ for $b\in G_\mu\setminus\{c_\mu\}$. We define 
\begin{equation}
C_{\mu,\nu}:=\{b\in C_\mu:\,i_{c_{\nu-1}}<i_b\leq i_{c_\nu}\}
\end{equation}
for $\nu=1,\ldots,\mu$, and
\begin{equation}
R_{\mu,\nu}:=\{b\in R_\mu:\,j_{c_{\nu+1}}<j_b\leq j_{c_\nu}\}
\end{equation}
for $\nu=\mu,\ldots,s$, with the convention that $i_{c_0}=0=j_{c_{s+1}}$. Observe that $C_\mu=\bigsqcup_{\nu=1}^\mu C_{\mu,\nu}$ and $R_\mu=\bigsqcup_{\nu=\mu}^s R_{\mu,\nu}$, and for $b\in C_\mu\sqcup R_\mu$,
\begin{equation}
h(b)=\begin{cases}
i_{c_\mu}-i_b+j_{c_\nu}-j_{c_\mu}+1, & \mbox{if $b\in C_{\mu,\nu}$,} \\
i_{c_\nu}-i_{c_\mu}+j_{c_\mu}-j_b+1, & \mbox{if $b\in R_{\mu,\nu}$.}
\end{cases}
\end{equation}
Hence, exploiting \eqref{e:cj}, we have
\small
\begin{align}
  &  \prod_{b\in C_\mu}\frac{h(b)}{h(b)-1}=
    \prod_{\nu=1}^\mu \prod_{b\in C_{\mu,\nu}}\frac{h(b)}{h(b)-1}\\
    &=\prod_{\nu=1}^\mu \prod_{b\in C_{\mu,\nu}}\frac{i_{c_\mu}-i_b+j_{c_\nu}-j_{c_\mu}+1}{i_{c_\mu}-i_b+j_{c_\nu}-j_{c_\mu}} \\
    &=\prod_{i=i_{c_{\mu-1}}+1}^{i_{c_\mu}-1}\frac{i_{c_\mu}-i+1}{i_{c_\mu}-i}
    \prod_{\nu=1}^{\mu-1} \prod_{i=i_{c_{\nu-1}}+1}^{i_{c_\nu}} \frac{i_{c_\mu}-i+j_{c_\nu}-j_{c_\mu}+1}{i_{c_\mu}-i+j_{c_\nu}-j_{c_\mu}}
    \\
    &=(i_{c_\mu}-i_{c_{\mu-1}})\prod_{\nu=1}^{\mu-1}
    \frac{i_{c_\mu}-i_{c_{\nu-1}}+j_{c_\nu}-j_{c_\mu}}{i_{c_\mu}-i_{c_\nu}+j_{c_\nu}-j_{c_\mu}} \\
    &=k_\mu\prod_{\nu=1}^{\mu-1} \frac{k_\nu+p_\nu+\ldots+k_\mu}{p_\nu+\ldots+k_\mu} \\
    &=k_\mu\prod_{\nu=1}^{\mu-1}\left(1+\frac{k_\nu}{d(c_\nu,c_\mu)}\right).
\end{align}
\normalsize
Analogously, we have
\small
\begin{align}
    &\prod_{b\in R_\mu}\frac{h(b)}{h(b)-1}
    =\prod_{\nu=\mu}^s \prod_{b\in R_{\mu,\nu}}\frac{h(b)}{h(b)-1}\\
    &=\prod_{\nu=\mu}^s \prod_{b\in R_{\mu,\nu}}\frac{i_{c_\nu}-i_{c_\mu}+j_{c_\mu}-j_b+1}{i_{c_\nu}-i_{c_\mu}+j_{c_\mu}-j_b} \\
    &=\prod_{j=j_{c_{\mu+1}}+1}^{j_{c_\mu}-1}\frac{j_{c_\mu}-j+1}{j_{c_\mu}-j}
    \prod_{\nu=\mu+1}^s \prod_{j=j_{c_{\nu+1}}+1}^{j_{c_\nu}} \frac{i_{c_\nu}-i_{c_\mu}+j_{c_\mu}-j+1}{i_{c_\nu}-i_{c_\mu}+j_{c_\mu}-j}
    \\
    &=(j_{c_\mu}-j_{c_{\mu+1}})\prod_{\nu=\mu+1}^s
    \frac{i_{c_\nu}-i_{c_{\mu}}+j_{c_\mu}-j_{c_{\nu+1}}}{i_{c_\nu}-i_{c_\mu}+j_{c_\mu}-j_{c_\nu}} \\
    &=p_\mu\prod_{\nu=\mu+1}^s \frac{p_\mu+\ldots+k_\nu+p_\nu}{p_\mu+\ldots+k_\nu} \\
    &=p_\mu\prod_{\nu=\mu+1}^s\left(1+\frac{p_\nu}{d(c_\mu,c_\nu)}\right).
\end{align}
Coming back to~\eqref{e:cle}, we get
\small
\begin{equation} 
    c(\lambda)=\sum_{\mu=1}^s
    \frac{k_\mu p_\mu}{d-i_{c_\mu}+j_{c_\mu}}\prod_{\nu=1}^{\mu-1}\left(1+\frac{k_\nu}{d(c_\nu,c_\mu)}\right)
    \prod_{\nu=\mu+1}^s\left(1+\frac{p_\nu}{d(c_\mu,c_\nu)}\right)
\end{equation} 
\normalsize
what, after taking \eqref{e:cj} into account, ends the proof.
\end{proof}

As an example, let us consider the diagram $\lambda=[1,1,\ldots,1]=[1^{\times n}]$. Its step representation is $s=1$, $(k_1,p_1)=(n,1)$. Direct application of the formula \eqref{e:clambda} shows that
\begin{equation}
\label{e:asym}
    c([1^{\times n}])=\frac{n}{d-n+1}.
\end{equation}

Now, we are ready to prove the main theorem of this subsection
\begin{theorem}
    For every diagram $\lambda$ with $n$ boxes,
    \begin{equation}
        c(\lambda)\leq c([1^{\times n}]).
    \end{equation}
\end{theorem}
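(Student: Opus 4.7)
The plan is to bypass the step-representation formula~\eqref{e:clambdap} and give a direct three-step proof via a probabilistic interpretation, combining the hook-content formula, the hook-length formula, and the branching rule for the symmetric group.

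First, I would apply the hook-content formula $m_\lambda = \prod_{b\in\lambda}(d-i_b+j_b)/h(b)$ together with the hook-length formula $d_\lambda = n!/\prod_{b\in\lambda}h(b)$, where $d_\lambda$ denotes the dimension of the $S_n$-irrep labelled by $\lambda\vdash n$. When a corner box $c_\mu=(i_{c_\mu},j_{c_\mu})$ is removed to form $\beta_\mu$, the content product loses exactly the factor $d-i_{c_\mu}+j_{c_\mu}$, while the two hook-length products are governed by the formulas above. Taking the ratio gives the clean identity
\[
\frac{m_{\beta_\mu}}{m_\lambda}=\frac{n\,d_{\beta_\mu}}{d_\lambda\,(d-i_{c_\mu}+j_{c_\mu})},
\]
and summing over the corners yields
\[
c(\lambda)=\frac{n}{d_\lambda}\sum_{\mu=1}^{s}\frac{d_{\beta_\mu}}{d-i_{c_\mu}+j_{c_\mu}}.
\]

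Second, I would invoke the branching rule for $S_n\supset S_{n-1}$, which is multiplicity-free and yields $d_\lambda=\sum_{\mu}d_{\beta_\mu}$ (an identity already invoked implicitly in Figure~\ref{Fig4}). Consequently the coefficients $d_{\beta_\mu}/d_\lambda$ form a probability distribution on the corners of $\lambda$, and $c(\lambda)$ equals $n$ times a weighted average of $1/(d-i_{c_\mu}+j_{c_\mu})$. Any weighted average is at most the maximum of its summands, so
\[
c(\lambda)\leq\frac{n}{\min_\mu\,(d-i_{c_\mu}+j_{c_\mu})}.
\]

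Third, for any corner $c_\mu=(i,j)$ of $\lambda$ one has $j\geq 1$ and $i\leq H(\lambda)\leq n$, so $d-i+j\geq d-n+1$, with equality attained only at the corner $(n,1)$, which forces $\lambda=[1^{\times n}]$. This gives $c(\lambda)\leq n/(d-n+1)=c([1^{\times n}])$, completing the proof. The main concern I foresee is purely notational: confirming that the hook-content formula is applied with the sign conventions used in the derivation of the previous proposition (which itself leans on~\cite{FultonSchur}), and that the branching rule is stated in the correct direction. Once those are aligned, the entire argument reduces to a one-line weighted-average bound and avoids the intricate step-representation machinery altogether.
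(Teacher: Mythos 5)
Your proof is correct, and it takes a genuinely different route from the paper's. The paper proves the theorem by combining the step-representation formula~\eqref{e:clambdap} from the preceding proposition with the identity~\eqref{e:vershik} imported from~\cite{vershik}, then bounding each denominator $d-\sum_{i=1}^jk_i+\sum_{i=j}^sp_i$ from below by $d-n+1$. You instead work directly from~\eqref{e:cl}: the hook-content and hook-length formulas give the exact ratio $m_{\beta_\mu}/m_\lambda = n\,d_{\beta_\mu}/\bigl(d_\lambda(d-i_{c_\mu}+j_{c_\mu})\bigr)$, the multiplicity-free branching rule $d_\lambda=\sum_\mu d_{\beta_\mu}$ turns $c(\lambda)$ into $n$ times a convex combination of the numbers $1/(d-i_{c_\mu}+j_{c_\mu})$, and the bound $d-i_{c_\mu}+j_{c_\mu}\geq d-n+1$ for any corner finishes the argument. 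The two proofs are secretly parallel --- your identity together with the branching rule is equivalent to Vershik's identity~\eqref{e:vershik} once one matches the corner coordinates via~\eqref{e:cj} --- but your version buys a cleaner, self-contained derivation that replaces both the combinatorial computation of the proposition and the external citation with two standard facts already implicitly present in the paper (the hook-content expression in~\eqref{e:cl} and the dimension relation $d_\lambda=d_{\alpha_1}+d_{\alpha_2}+d_{\alpha_3}$ noted in the caption of Figure~\ref{Fig4}). Two minor points worth making explicit: the convex-combination bound requires all weights $d_{\beta_\mu}/d_\lambda$ to be nonnegative (they are, being ratios of dimensions), and the final division by $d-n+1$ requires $n\leq d$, an assumption the paper's proof also uses implicitly and which holds in context since $n=k+1\leq d$.
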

\begin{proof}
It was shown in \cite{vershik} that
\begin{equation}
\label{e:vershik}
    \sum_{j=1}^s k_j p_j \prod_{i=1}^{j-1} \left(1+\frac{k_i}{d(c_i,c_j)}\right) \prod_{i=j+1}^s \left(1+\frac{p_i}{d(c_j,c_i)}\right) = n.
\end{equation}
Observe that $\sum_{i=1}^jk_i\leq n$ and $\sum_{i=j}^sp_i\geq 1$ for every $j=1,2,\ldots,s$. Hence, $d-\sum_{i=1}^jk_i+\sum_{i=j}^np_i\geq d-n+1$. From \eqref{e:clambdap} and \eqref{e:vershik} we get
\small
\begin{align}
    c(\lambda) \leq& \frac{1}{d-n+1}\sum_{j=1}^s k_j p_j \prod_{i=1}^{j-1} \left(1+\frac{k_i}{d(c_i,c_j)}\right)\prod_{i=j+1}^s \left(1+\frac{p_i}{d(c_j,c_i)}\right) \\
    =& \frac{n}{d-n+1}.
\end{align}
According to \eqref{e:asym} the latter number is equal to $c([1^{\times n}])$.
\end{proof}
The above theorem clearly shows that the maximum possible value of the constant from~\eqref{max_c_dual_homo} is attained for antisymmetric partitions with $n=k+1$ boxes:
\begin{align}
c=\frac{1}{d}\max_{\lambda\vdash (k+1)} c(\lambda)=\frac{k+1}{d(d-k)}.
\end{align}
This exactly reproduces the lower bound obtained from the primal problem and proves optimally of chosen ansatz.

\section{Discussions}

We have presented  the optimal quantum circuits for transforming $k$ calls of a $d$-dimensional unitary operation $U_d$ into its complex conjugate $\overline{U_d}$ in a deterministic non-exact manner. 
When considering probabilistic exact transformations, Ref.~\cite{quintino19PRA} shows that transforming $k$ uses of $U_d$ into its complex conjugation $\overline{U_d}$ necessarily has probability one or probability zero, and shows that probability one can only be attained when $k\geq d-1$ -- completely solving the probabilistic exact complex conjugation. Hence, the problem of designing circuits for unitary complex conjugation is solved on the probabilistic exact paradigm and following our work now also on the deterministic non-exact one.  

A direct application of our circuit for unitary complex conjugation is to design a quantum circuit for reversing quantum operations.
In Ref.~\cite{quintino19PRA}, the authors present a simple probabilistic circuit which performs exact unitary transposition $U_d\mapsto U_d^T$ with probability $1/d^2$. 
Since $U_d^{-1}=\overline{U_d}^T$, if we perform a unitary transposition protocol after the complex conjugation circuit presented here, we can transform $k$ uses of $U_d$ into its inverse $U_d^{-1}$ with probability $1/d^2$ and an average fidelity of $\mean{F}=\frac{k+1}{d(d-k)}$.

As stated earlier, since $\overline{(U_dV_d)}=(\overline{U_d})\,(\overline{V_d})$, the complex conjugation function $f(U_d)=\overline{U_d}$ is a homomorphism. Also, complex conjugation is the only non-trivial homomorphism $f$ between $d$-dimensional special unitary operators $f:\SU(d)\to\SU(d)$. This follows from the fact that there exist only three $d-$dimensional representations for the group $\SU(d)$ \cite{harrisBook}: the trivial representation $f_\text{trivial}(U_d)=\id$, the defining representation, $f_\text{def}(U_d)=U_d$, and the conjugate representation $f_\text{conj}(U_d)=\overline{U_d}$. Hence, we have obtained optimal circuits for transforming $k$ uses of $U_d$ into $f(U_d)$ for all homomorphic functions $f:\SU(d)\to\SU(d)$.

\section*{Acknowledgements} MS, TM are supported through grant Sonata 16, UMO-2020/39/D/ST2/01234 from the Polish National Science Centre. 	M.T.Q. acknowledges the Austrian Science Fund (FWF) through the SFB project BeyondC (sub-project F7103), a grant from the Foundational Questions Institute (FQXi) as part of the  Quantum Information Structure of Spacetime (QISS) Project (qiss.fr). The opinions expressed in this publication are those of the authors and do not necessarily reflect the views of the John Templeton Foundation. This project has received funding from the European Union’s Horizon 2020 research and innovation programme under the Marie Skłodowska-Curie grant agreement No 801110.
It reflects only the authors' view, the EU Agency is not responsible for any use that may be made of the information it contains. ESQ has received funding from the Austrian Federal Ministry of Education, Science and Research (BMBWF). MH acknowledges support from the Foundation for Polish Science through IRAP project co-financed by the EU within the Smart Growth Operational Programme (contract no. 2018/MAB/5). For the purpose of Open Access, the author has applied a CC-BY public copyright licence to any Author Accepted Manuscript (AAM) version arising from this submission.

\bibliographystyle{0_MTQ_apsrev4-2_corrected}
\bibliography{0_MTQ_bib.bib}

\end{document}